\documentclass[11pt, letterpaper]{article}

\usepackage{pbg_template}
\usepackage{parskip}
\usepackage{float}

\usepackage[utf8]{inputenc} 
\usepackage[T1]{fontenc}    
\usepackage{hyperref}       
\usepackage{url}            
\usepackage{booktabs}       
\usepackage{amsfonts}       
\usepackage{nicefrac}       
\usepackage{microtype}      
\usepackage{xcolor}         

\usepackage{wrapfig}
\usepackage{caption}
\usepackage{multirow}
\usepackage{tabularx}

\RequirePackage[authoryear,square]{natbib}
\usepackage{amsmath}
\usepackage{amssymb}
\usepackage{mathtools}      
\usepackage{amsthm}         

\usepackage{algorithm}      
\usepackage{algpseudocode}  
\usepackage{xcolor}         

\usepackage[capitalize,noabbrev]{cleveref} 

\theoremstyle{plain}
\newtheorem{theorem}{Theorem}[section]
\newtheorem{proposition}[theorem]{Proposition}

\newtheorem{corollary}[theorem]{Corollary}

\definecolor{pennred}{RGB}{153,0,0}
\definecolor{pennblue}{RGB}{102, 194, 165}  

\hypersetup{
    colorlinks=true,
    linkcolor=blue,  
    citecolor=blue,  
}

\theoremstyle{definition}

\theoremstyle{remark}

\title{Minimal-Action Discrete Schrödinger Bridge Matching for Peptide Sequence Design}

\author{Shrey Goel,\textsuperscript{1} 
    Pranam Chatterjee\textsuperscript{2,3,\dag}
    
    \vspace{1em}
    \normalfont \small
    \textsuperscript{1}Department of Computer Science, Duke University\\ 
    \textsuperscript{2}Department of Computer and Information Science, University of Pennsylvania\\ 
    \textsuperscript{3}Department of Bioengineering, University of Pennsylvania  
    
    \vspace{0.5em}
    \textbf{Correspondence:} \href{mailto:pranam@seas.upenn.edu}{\texttt{pranam@seas.upenn.edu}} \\
    \textbf{Code:} \href{https://huggingface.co/ChatterjeeLab/MadSBM}{\texttt{https://huggingface.co/ChatterjeeLab/MadSBM}}
}

\begin{document}

\maketitle

\begin{abstract}
Generative modeling of peptide sequences requires navigating a discrete and highly constrained space in which many intermediate states are chemically implausible or unstable. Existing discrete diffusion and flow-based methods rely on reversing fixed corruption processes or following prescribed probability paths, which can force generation through low-likelihood regions and require countless sampling steps. We introduce \textbf{M}inimal-\textbf{a}ction \textbf{d}iscrete \textbf{S}chrödinger \textbf{B}ridge \textbf{M}atching (\textbf{MadSBM}), a rate-based generative framework for peptide design that formulates generation as a controlled continuous-time Markov process on the amino-acid edit graph.  To yield probability trajectories that remain near high-likelihood sequence neighborhoods throughout generation, MadSBM 1) defines generation relative to a biologically informed reference process derived from pre-trained protein language model logits and 2) learns a time-dependent control field that biases transition rates to produce low-action transport paths from a masked prior to the data distribution. We finally introduce guidance to the MadSBM sampling procedure towards a specific functional objective, expanding the design space of therapeutic peptides; to our knowledge, this represents the first-ever application of discrete classifier guidance to Schr\"odinger bridge-based generative models.
\end{abstract}

\section{Introduction}
Generative modeling has become a central tool for peptide and protein design, enabling data-driven discovery of binders, modulators, and degraders across a broad range of biological targets \citep{bhat2025pepprclip,brixi2023saltnpeppr,chen2025moppitv3,chen2025pepmlm,goel2025tokenlevel,hong2025duab,pacesa2025bindcraft,stark2025boltzgen,starkdirichlet,tang2025peptune,tang2025tr2d2,vincoff2025fusonplm,vincoff2025soapia,zhang2025metalorian}. Recent models based on autoregressive decoding, diffusion, and flow matching generate sequences by reversing a fixed corruption process or by interpolating in probability space \citep{chen2025areuredi, chen2025mogdfm, tang2025branchsbm, tang2025entangledsbm, tang2025gumbelfm, zhang2025scoobdoob, starkdirichlet}. These approaches have produced strong results, yet they impose a prescribed trajectory between noise and data that often forces the generative process through low-likelihood or unstable intermediates \citep{ho2020denoising, song2020score, sahoo2024mdlm, lipman2022flow, domingo2024stochastic, domingo2024adjoint}. Biological sequences are highly sensitive to local perturbations, so hand-crafted probability paths can misalign with the manifold of functional peptides and complicate both unconditional and guided generation.

A principled alternative is to treat generation as a transport problem \citep{chen2021optimal}. Optimal transport and Schrödinger bridge formulations define stochastic paths that connect a prior distribution to the data while minimizing an action functional \citep{schrodinger1932theorie, leonard2013survey, de2021diffusion}. In continuous settings this yields smoother and more stable generative trajectories, but classical Schrödinger bridge solvers rely on forward and backward iterations that are difficult to scale to discrete sequence spaces with large vocabularies and edit-based dynamics \citep{shi2023diffsbm, vargas2021solving, genevay2018learning}. Although recent work has extended Schrödinger bridge methods to discrete domains, including iterative forward-backward projections for categorical variables and diffusion-style CTMC bridges, these approaches are validated on VQ image tokens or graph data that do not directly align with long, language-like sequence spaces \citep{kim2024discrete, ksenofontov2025categorical}. Recent simulation-free transport methods demonstrate that optimal transport paths can be approximated without fully solving the forward and backward bridge \citep{tong2023simulation}, suggesting that a discrete variant may be feasible for biological sequences.

In this work, we introduce \textbf{Minimal-Action Discrete Schrödinger Bridge Matching (MadSBM)}, a discrete generative framework that models peptide sequence evolution as a controlled continuous-time Markov chain on the amino-acid edit graph.

\begin{figure}[h]
    \centering
    \includegraphics[width=0.7\linewidth]{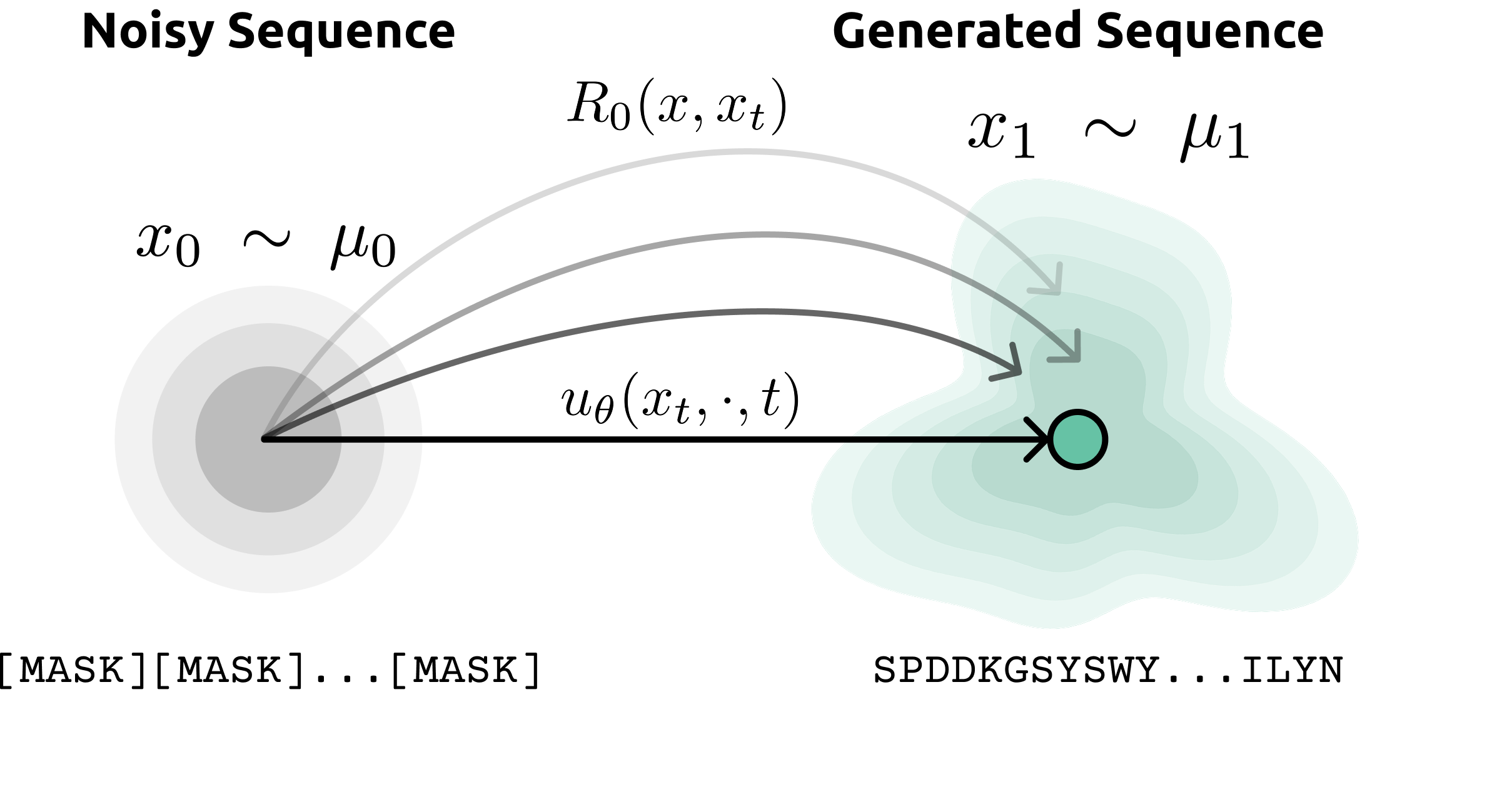}
    \caption{\textbf{Overview of MadSBM}. We leverage a principled reference process $R_0$ so the MadSBM model requires only a lightweight time-conditioned control field $u_\theta$ to steer samples toward high-likelihood regions of the sequence space.}
    \label{fig:prob_paths}
\end{figure}

MadSBM tilts a biologically informed reference process through learnable control fields so that generation follows low-action transport paths that remain near high-probability regions of sequence space, enabling efficient sampling without relying on fixed interpolation schemes as in diffusion or flow models. Our main contributions are threefold:
\begin{enumerate}
    \item  We formulate peptide sequence generation as a \textbf{discrete Schrödinger bridge}, casting design as minimal-action stochastic transport between a noisy prior and the data distribution over discrete amino-acid sequences.
    \item We introduce a \textbf{simulation-free learning rule} for Schrödinger bridge control that reduces training to a simple cross-entropy objective, avoiding explicit forward–backward bridge solvers while remaining consistent with minimal-action transport.
    \item We present \textbf{MadSBM}, a controllable rate-based generative model for biological sequences that improves sample efficiency and stability over discrete diffusion baselines and supports objective-guided peptide design under low sampling budgets.
\end{enumerate}

\section{Related Works}

A growing line of bridge-based methods have interpreted steering generative models as sampling from an exponentially tilted path measure of the form
$p_{\text{target}} \propto p_{\text{base}} \exp(r(x))$, where $p_{\text{base}}$ is a pre-trained generative process and $r(x)$ is a reward or energy function \citep{domingo2024adjoint, potaptchik2025tilt}.  
In this view, improved generative quality is achieved by modifying the transition dynamics of an \textit{existing} diffusion model to favor high-reward trajectories, and recent work has extend this idea to discrete domains by tilting a pre-trained discrete diffusion models \citep{tang2025tr2d2}. Critically, these approaches assume access to a learned base path measure that already transports noise to data and apply control as a post hoc modification of the learned process.

In contrast, MadSBM addresses \textit{de novo} generation in discrete sequence spaces, where no pre-trained path measure is available. We show that a carefully designed reference stochastic process leads to a simple training objective for learning a \textit{control field} that tilts this process into a Schrödinger bridge between the noise and data distribution.

\section{Preliminaries}
\label{sec:preliminaries}

\subsection{Discrete Sequences and Markov Dynamics}
Let $x=(x_1,\dots,x_L)\in\{0,1\}^{L\times\mathcal{V}}$ denote a discrete sequence of length $L$, where each token is represented as a one-hot vector over a vocabulary $\mathcal{V}$. The time evolution of a sequence is modeled as a Continuous-Time Markov Chain (CTMC) with generator $R$, where off-diagonal entries $R(x,x')$ define instantaneous transition rates. We use this to introduce the reference and controlled processes that shape the generator-induced dynamics toward target distributions.


\subsection{Reference and Controlled Processes}

\paragraph{Reference Dynamics}
Schrödinger bridge models begin by specifying a simple reference path measure $\mathbb{P}_0$ \citep{schrodinger1932theorie, leonard2013survey}. In discrete sequence spaces, a natural baseline is an uninformed CTMC generator $R_0$ corresponding to a uniform random walk over sequence tokens.
\begin{equation}
\label{eq:uninformed_ctmc}
R_0(x,x')=
\begin{cases}
\frac{1}{L|\mathcal{V}|} & x' \neq x,\\
-\sum_{y\neq x}R_0(x,y) & x'=x.
\end{cases}
\end{equation}
The induced process evolves tokens independently and quickly drifts into low-probability and biologically implausible regions of sequence space. Although not suitable for direct sequence generation, this process provides a canonical reference relative to which a \textit{controlled process} can be defined.

\paragraph{Exponential Tilting}
We formulate generation as steering the reference process toward the data distribution along an optimal transport path, \textit{i.e.} matching endpoint marginals while minimizing deviation from $\mathbb{P}_0$. To this end, we parameterize a controlled generator $R_u$ via an exponential tilt of the reference rates:
\begin{equation}
R_u(x,x')=R_0(x,x')\exp\big(u(x,x')\big).
\end{equation}
Here, the control field $u:\mathcal{V}^L\times\mathcal{V}^L\to\mathbb{R}$ acts as a learnable log-potential on transitions, which are combined with the reference generator to define time-dependent transition rates. This construction ensures that the controlled path measure $\mathbb{P}_u$ remains absolutely continuous with respect to $\mathbb{P}_0$, as required for the Schrödinger bridge formulation, while allowing the model to encode context-dependent biases from the reference over sequence edits.

\subsection{The Schr\"odinger Bridge Objective}
Given endpoint distributions $\mu_0$ and $\mu_1$ over discrete sequences, the Schr\"odinger bridge problem seeks a controlled stochastic process whose path measure minimizes relative entropy to a reference process $\mathbb{P}_0$ \citep{schrodinger1932theorie}:
\begin{equation}
\label{eq:sb_problem}
\mathbb{P}_{u^\star}
=
\arg\min_{\mathbb{P}_u \in \mathcal{P}}
\left\{
\mathrm{KL}\!\left(\mathbb{P}_u \,\Vert\, \mathbb{P}_0\right)
\;\middle|\;
(\mathbb{P}_u)_0 = \mu_0,\;
(\mathbb{P}_u)_T = \mu_1
\right\}.
\end{equation}

The solution $\mathbb{P}_{u^\star}$ defines a stochastic interpolation between $\mu_0$ and $\mu_1$. Classical approaches compute this solution via iterative forward--backward projections of the Markov semigroup \citep{vargas2021solving, genevay2018learning}, which are intractable in high-dimensional discrete sequence spaces \citep{sokolov2025exponential}.

\section{Methods}
\label{sec:problem}

We now formalize Minimal-Action Discrete Schrödinger Bridge Matching as a generative transport problem on discrete biological sequences. We consider a reference continuous-time Markov process on sequence space and a controlled process that transports a simple prior distribution to the data distribution along low-action paths. Our objective is to learn a parametric control field $u_{\theta}$ whose induced path measure follows the minimal-action Schr\"odinger bridge connecting the prior and the data.

\subsection{Path Measures and Relative Entropy}
We posit that learning the path measure $\mathbb{P}_u$ induced by the controlled generator $R_u$ is directly possible through a Schr\"odinger bridge objective in Eq.~\eqref{eq:sb_problem}. To explicitly solve this minimization problem, we must quantify the KL divergence between the controlled path measure $\mathbb{P}_{u}$ and the reference $\mathbb{P}_{0}$ induced by the controlled and uninformed generators $R_u$ and $R_0$, respectively. Critically, we derive a tractable form of the path-space KL divergence:
\newline
\begin{theorem}  
(Path-space KL decomposition for CTMCs).
\label{thm:main_kl}
Let $\mathbb{P}_{u}$ and $\mathbb{P}_{0}$ be path measures induced by time-inhomogeneous CTMCs. The relative entropy is the time-integral over instantaneous intensity differences.

\begin{equation}
\label{eq: relative entropy}
\begin{aligned}
\mathrm{KL}(\mathbb{P}_{u} \,\Vert\, \mathbb{P}_{0})
&= \mathbb{E}_{\mathbb{P}_{u}}
\Bigg[
    \int_{0}^{T}
    \sum_{x' \neq X_t}
    \Big(
        R_{u}(X_{t}, x') 
        \log \frac{R_{u}(X_{t}, x')}{R_{0}(X_{t}, x')}
\\
&\hphantom{= \mathbb{E}_{\mathbb{P}_{u}}\Bigg[ \int_{0}^{T} \sum_{x' \neq X_t} \Big(}
        -\, R_{u}(X_{t}, x')
        + R_{0}(X_{t}, x')
    \Big)
    \, dt
\Bigg].
\end{aligned}
\end{equation}
\end{theorem}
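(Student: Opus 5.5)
The plan is to go through the Radon--Nikodym derivative of $\mathbb{P}_u$ with respect to $\mathbb{P}_0$ on path space, given by the Girsanov-type formula for jump processes. Throughout we assume both chains start from the same initial law $\mu_0$, so the initial-marginal term $\mathrm{KL}(\mu_0\Vert\mu_0)$ vanishes. We also assume $R_u(x,\cdot)$ and $R_0(x,\cdot)$ have the same support, which the exponential tilt $R_u=R_0 e^{u}$ guarantees. Finally, we assume the rates are bounded and measurable in $t$ on the finite state space $\mathcal{V}^L$, so there are finitely many jumps a.s.\ on $[0,T]$ and all expectations are finite.

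\textbf{Step 1 (path density).} A càdlàg path is determined by its jump times $0<\tau_1<\dots<\tau_N\le T$ and post-jump states. Under a CTMC with generator $R$ started at $x_0$, its density with respect to the counting-times Lebesgue reference measure is
\[
\prod_{k=1}^{N} R_{\tau_k}(X_{\tau_k^-},X_{\tau_k})\,\exp\Big(-\int_0^T \sum_{x'\neq X_t} R_t(X_t,x')\,dt\Big).
\]
This is the product of jump intensities times the survival probability. Taking the ratio of the two such densities gives
\[
\log\frac{d\mathbb{P}_u}{d\mathbb{P}_0}(X)=\sum_{k=1}^{N}\log\frac{R_u(X_{\tau_k^-},X_{\tau_k})}{R_0(X_{\tau_k^-},X_{\tau_k})}-\int_0^T\sum_{x'\neq X_t}\big(R_u-R_0\big)(X_t,x')\,dt .
\]
To make this rigorous, discretize time with mesh $h$, write the one-step transition ratio $(\delta_{xx'}+hR_u)/(\delta_{xx'}+hR_0)$, and take $h\to0$.

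\textbf{Step 2 (compensator).} Take $\mathbb{E}_{\mathbb{P}_u}$ of this log-ratio. The jump sum is an integral against the counting measure $N(dt,x')$ of jumps into $x'$. Under $\mathbb{P}_u$ this measure has compensator $R_u(X_{t^-},x')\,dt$, so $N-\int R_u\,dt$ is a martingale measure. Hence for any bounded predictable $f$,
\[
\mathbb{E}_{\mathbb{P}_u}\Big[\sum_k f(\tau_k,X_{\tau_k^-},X_{\tau_k})\Big]=\mathbb{E}_{\mathbb{P}_u}\Big[\int_0^T\sum_{x'\neq X_t} R_u(X_t,x')\,f(t,X_t,x')\,dt\Big].
\]
Equivalently, use the Lévy system / Dynkin formula. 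Applying this with $f=\log(R_u/R_0)$ converts the jump sum into $\int R_u\log(R_u/R_0)\,dt$. Adding the compensator term yields exactly the claimed integrand $R_u\log\frac{R_u}{R_0}-R_u+R_0$.

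\textbf{Main obstacle.} The delicate part is justifying Step 1 in full. This means establishing absolute continuity and the explicit Radon--Nikodym formula, and the integrability needed to apply the compensator identity, since $f=\log(R_u/R_0)$ is only bounded if $u$ is bounded. I would handle this by assuming $u$ bounded, or by localizing with stopping times at the $n$-th jump and passing to the limit by monotone convergence. The latter works because the integrand $a\log(a/b)-a+b\ge0$ is nonnegative.
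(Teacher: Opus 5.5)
Your proposal is correct and follows essentially the same route as the paper: write the path likelihood as the product of jump intensities times the survival factor, take the log-ratio, and use the CTMC compensator identity under $\mathbb{P}_u$ to turn the jump sum into $\int R_u\log(R_u/R_0)\,dt$, with the exit-rate difference supplying $-R_u+R_0$. Your explicit assumption of a common initial law $\mu_0$ is exactly what the paper uses implicitly when it discards the initial term $\mathrm{KL}(\nu\Vert\nu_0)$. Your integrability and localization remarks only add rigor that the paper's proof omits.
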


The proof is given in Appendix~\ref{path space kl proof}.

\subsection{The Action Functional}
In statistcal physics, the \emph{action} quantifies the accumulated cost of forcing a system away from its equilibrium dynamics. Naturally, we seek the minimal action required to steer the reference process toward the data distribution using the controlled path measure $\mathbb{P}_u$. The cost of this intervention is quantified by the relative entropy between the controlled and reference path measures, which simplifies to the action functional.
\newline
\begin{corollary} [The Action Functional]
By substituting the exponential tilt parameterization, $R_{u}(x, x') = R_{0}(x, x') \exp(u(x, x'))$, into the general relative entropy form (Theorem~\ref{thm:main_kl}), we simplify the path-space relative entropy into the \textbf{Action Functional} $\mathcal{A}(u)$:
\begin{equation}
    \label{eq: action of control}
    \mathcal{A}(u)
    =
    \mathbb{E}_{\mathbb{P}_u}
    \left[
    \int_0^T
    \sum_{x' \neq X_t}
    R_0(X_t,x')\,
    \Psi\!\big(u(X_t,x')\big)\,
    dt
    \right],
\end{equation}
where $\Psi(z) = e^{z} - z - 1$ is the strictly convex cost function associated with the local change of measure.
\end{corollary}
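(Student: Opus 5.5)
The plan is to substitute the tilt directly into the integrand of Theorem~\ref{thm:main_kl}, using no other ingredient. Fix a time $t$ and a current state $X_t=x$. For each $x'\neq x$, write $R_u(x,x')=R_0(x,x')e^{u(x,x')}$. Then $\log\frac{R_u}{R_0}=u(x,x')$. This is well defined because $R_0(x,x')=\frac{1}{L|\mathcal{V}|}>0$ for every off-diagonal pair, so no ratio is of the form $0/0$. The summand in Eq.~\eqref{eq: relative entropy} therefore becomes
\begin{equation*}
R_0 e^{u}u - R_0 e^{u} + R_0 \;=\; R_0(x,x')\,\varphi\big(u(x,x')\big),\qquad \varphi(z)=z e^{z}-e^{z}+1 .
\end{equation*}
Since $R_0$ is deterministic and factors out pointwise, the expectation under $\mathbb{P}_u$ and the time integral are untouched. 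One immediately gets $\mathrm{KL}(\mathbb{P}_u\Vert\mathbb{P}_0)=\mathbb{E}_{\mathbb{P}_u}\big[\int_0^T\sum_{x'\neq X_t}R_0\,\varphi(u)\,dt\big]$, which has exactly the structure of $\mathcal{A}(u)$ in Eq.~\eqref{eq: action of control}.

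The second step is to identify the local cost with $\Psi$. The key algebraic identity I would use is
\begin{equation*}
\varphi(z)=e^{z}\big(e^{-z}+z-1\big)=e^{z}\,\Psi(-z),
\end{equation*}
so that $R_0\,\varphi(u)=R_u\,\Psi(-u)$. This says the intensity-weighted cost is $\Psi$ evaluated on the log-ratio $\log(R_0/R_u)$ and measured against the controlled rates. Equivalently, it is $\Psi$ applied to the control when the tilt is read in the direction of the change of measure $d\mathbb{P}_0/d\mathbb{P}_u$. I would present the corollary's $\Psi(u)$ through this identity, as the local change-of-measure cost. I would also record the properties the paper uses later: $\varphi,\Psi\ge 0$, both vanish only at $0$, and both are strictly convex, since $\varphi''(z)=(z+1)e^{z}+e^{z}-\dots$ reduces to $\varphi''=e^{z}(z+1)$ on the relevant range and $\Psi''=e^{z}>0$. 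Nonnegativity and vanishing exactly at $u\equiv 0$ are what make $\mathcal{A}$ a genuine action.

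I expect this identification to be the main obstacle. Taken literally, the stated integrand $R_0(X_t,x')\Psi(u(X_t,x'))$ with the expectation under $\mathbb{P}_u$ is not pointwise equal to the substituted integrand $R_0\varphi(u)$. Their difference is $R_0\big(ue^{u}-2e^{u}+u+2\big)$, which is not identically zero. I would therefore not rely on pointwise equality. I would first check whether a martingale (Dynkin/compensator) identity for the counting process of jumps under $\mathbb{P}_u$ can absorb the discrepancy in expectation. If it cannot, the corollary must be stated with $\Psi$ understood via $R_0\varphi(u)=R_u\Psi(-u)$.
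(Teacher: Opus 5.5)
Your substitution step is exactly the paper's proof. The appendix plugs $R_u=R_0e^{u}$ edge by edge into the integrand of Theorem~\ref{thm:main_kl} and obtains $R_0(ue^{u}-e^{u}+1)$, i.e.\ your $R_0\varphi(u)$. This is what the appendix version of the corollary, Eq.~\eqref{eq:action_exptilt}, actually asserts, under equal initial laws $\nu=\nu_0=\mu_0$.

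Your suspicion about the identification with $\Psi$ is correct, and you should close the branch you left open rather than hope for a martingale fix. The paper's proof names $\Psi(z)=e^{z}-z-1$ but never produces it, and the main-text formula with $R_0\Psi(u)$ under $\mathbb{E}_{\mathbb{P}_u}$ is false. Take $u\equiv c$ constant with the uniform reference, so every state has the same exit rate $\lambda_0=\sum_{x'\neq x}R_0(x,x')$ and both sides are deterministic. Since the embedded jump kernels coincide, $\mathrm{KL}(\mathbb{P}_u\Vert\mathbb{P}_0)$ reduces to the KL between Poisson jump counts, $T\lambda_0\varphi(c)$, while $\mathcal{A}(u)=T\lambda_0\Psi(c)$. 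At $c=1$ these are $T\lambda_0$ and $T\lambda_0(e-2)$. No compensator or Dynkin identity can absorb a discrepancy that is a nonzero deterministic constant.

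The integrand $R_0\Psi(u)$ belongs to the reverse divergence, $\mathrm{KL}(\mathbb{P}_0\Vert\mathbb{P}_u)=\mathbb{E}_{\mathbb{P}_0}[\int_0^T\sum_{x'\neq X_t}R_0(X_t,x')\Psi(u(X_t,x'))\,dt]$; the statement pairs it with the expectation of the forward direction. What can be proved is the corollary with $\varphi$ in place of $\Psi$. Your form $R_u\Psi(-u)$ is an equivalent correct identity, but it is a corrected statement, not a reading of the original one.

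One of your side remarks is wrong: $\varphi$ is not strictly convex in $u$. From $\varphi'(z)=ze^{z}$ you get $\varphi''(z)=(1+z)e^{z}$, which is negative for $z<-1$. No ``relevant range'' excludes this, because the control field is an unconstrained real. What survives is $\varphi\ge 0$ with equality only at $z=0$, since $\varphi'$ has the sign of $z$. You also keep convexity in the rate variable $r=e^{z}$, where the integrand reads $R_0(r\log r-r+1)$. The ``strictly convex'' attribute in the statement holds for $\Psi$ but does not transfer to the correct integrand as a function of $u$.
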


The proof is given in Appendix~\ref{cor:action_tilt}. Consequently, minimal-action transport seeks a control $u^{\star}$ that minimizes $\mathcal{A}(u)$ subject to prescribed endpoint constraints. The resulting process represents the \emph{path of least resistance} that transports the prior to the data distribution with the minimum amount of information injected into the reference dynamics to achieve the desired behavior.

\subsection{Learning Minimal-Action Control Fields}

To minimize the action functional $A(u)$ without solving the full Schrödinger bridge, we seek a time-dependent control field $u_\theta(x,x',t): \mathcal{V}^L \times \mathcal{V}^L \times [0, 1] \to \mathbb{R}^{L \times \mathcal{V}}$  that approximates the optimal control $u^\star$. In MadSBM, this control field can be learned with a neural network with parameters $\theta$, which in turn defines a learned, time-dependent generator for the controlled dynamics:

\begin{equation}
\label{time dependent generator}
R_{u,\theta}(x,x')
=
R_0(x,x')\,\exp\!\big(u_\theta(x,x',t)\big).
\end{equation}

with diagonal entries $R_{\theta, t}(x, x) = - \sum_{x' \ne x} R_{\theta, t}(x, x')$. A similar modification of time-dependency can be made to the cost function $\Psi(u(X_t, x', t))$ used to define the relative entropy between the path measures $\mathbb{P}_u$ and $\mathbb{P}_0$ in Eq.~\eqref{eq: relative entropy} and the action of control $\mathcal{A}(u)
=
\mathbb{E}_{\mathbb{P}_u}
\left[
\int_0^T
\sum_{x' \neq X_t}
R_0(X_t,x')\,
\Psi\!\big(u(X_t,x',t)\big)\,
dt
\right]
$ in Eq.~\eqref{eq: action of control}.

Since our rate matrix is a tilted uniform generator, the time-evolution of this marginal density satisfies the discrete Kolmogorov Forward Equation (KFE). If we let $\rho_{t}^{\theta}$ denote the marginal distribution of $X_{t}$ under $R_{\theta,t}$ the discrete KFE can be written as

\begin{equation}
    \frac{d}{dt} \rho_{t}^{\theta}(x)
    =
    \sum_{y}
    \rho_{t}^{\theta}(y) R_{\theta,t}(y, x)
    -
    \rho_{t}^{\theta}(x) \sum_{y} R_{\theta,t}(x, y).
\end{equation}

Integrating this controlled process backward from $t=1\to 0$ yields a generative procedure analogous to the backward integration of continuous flows. Our formulation overall defines the process $(R_{\theta,t}, \mu_{0})$ as the MadSBM generative model.

\subsection{Training MadSBM}

\paragraph{Optimal Couplings and Interpolation}
To learn a time-dependent control field, MadSBM requires intermediate sequence states $x_t$ interpolating between a noise distribution $\mu_0$ and the data distribution $\mu_1$. We construct such states by pairing fully masked sequences with clean targets and applying a simple masking-based interpolation.

Specifically, we draw a timestep $k \sim \mathrm{Uniform}\{1,\dots,T\}$ and set $t = k/T$. At time $t \in [0,1]$, each token $x_t^{(i)}$ independently reveals the target token $x_1^{(i)}$ with probability $t$ and remains masked with probability $1-t$, defining the forward perturbation kernel
\begin{equation}
\label{eq:masking_rate_sched}
p_t\big(x^{(i)} \mid x_1^{(i)}\big)
=
(1-t)\,\delta_{\mathcal{M}}(x^{(i)})
+
t\,\delta_{x_1^{(i)}}(x^{(i)}),
\end{equation}
where $\delta_{\mathcal{M}}$ is the Kronecker delta on the mask token $\mathcal{M}$. Uniform timestep sampling follows standard practice in discrete diffusion and flow-based models \citep{wang2024diffusion} and provides a tractable approximation to intermediate Schrödinger bridge marginals.

\paragraph{Learning the Control Field}
Given corrupted sequences from Eq.~\eqref{eq:masking_rate_sched}, we learn a control field that tilts a reference process toward the data distribution along minimal-action paths. Under the exponential tilt parameterization, transition rates decompose as
$\log R_u(x,x') = \log R_0(x,x') + u_\theta(x,x')$.
While a uniform random-walk reference $R_0$ is analytically convenient, it fails to capture peptide biophysics and causes trajectories to drift rapidly into low-probability regions.

We therefore define $R_0$ using a biologically informed prior derived from a pre-trained encoder-only protein language model. Specifically, we use logits $f_\phi(x_t) \in \mathbb{R}^{L\times\mathcal{V}}$ from the frozen ESM-2-650M masked language model \citep{lin2023esm2} as reference transition scores. Although ESM-2 is not generative (Appendix~\ref{esm-2 appendix}), its logits provide a local measure of token plausibility suitable for defining reference dynamics. The resulting learned transition distribution decomposes to
\begin{equation}
\label{eq:log_probs}
\log p_\theta(\cdot \mid x_t, t)
\propto
\underbrace{u_\theta(x_t,\cdot,t)}_{\text{Learned tilt}}
+
\underbrace{(1-t)\,f_\phi(x_t)}_{\text{Reference } R_0},
\end{equation}
where the factor $(1-t)$ downweights the reference prior under heavy masking ($t\to 1$), where the language model $f_\phi(\cdot)$ is outside its training distribution.

With Eq.~\eqref{eq:log_probs}, training reduces to maximizing the transition intensity toward the target sequence, equivalent to minimizing the cross-entropy loss on corrupted tokens,
\begin{equation}
\label{eq:madsbm_loss}
\mathcal{L}(\theta)
=
-
\sum_{i=1}^L
\mathbf{1}_{\{x_t^{(i)} \neq x_1^{(i)}\}}
\log p_\theta(x_1^{(i)} \mid x_t,t).
\end{equation}

\paragraph{Analysis of the Training Objective}
Although simple, this objective yields a consistent approximation to minimal-action control, formalized in the following proposition.
\newline
\begin{proposition}[Consistency with Minimal-Action Control]
\label{prop:consistency}
Minimizing the cross-entropy loss in Eq.~\eqref{eq:madsbm_loss} is equivalent to minimizing the KL divergence between the model transition kernel and the optimal forward transitions of the Schrödinger bridge. Consequently, the learned control field $u_\theta$ converges to the unique minimal-action velocity field transporting $\mu_0$ to $\mu_1$ under reference dynamics $R_0$.
\end{proposition}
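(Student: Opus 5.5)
We would prove the proposition in two stages. First we establish a population-level identity: the expected cross-entropy equals an expected KL divergence between transition kernels, up to a constant that does not depend on $\theta$. Second we use strict convexity of the action, via Corollary on the Action Functional, to identify the minimizer as the Schr\"odinger bridge control.

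\textbf{Stage 1 (cross-entropy as a KL).} Take the expectation of Eq.~\eqref{eq:madsbm_loss} over $x_1\sim\mu_1$, over $t$, and over $x_t\sim p_t(\cdot\mid x_1)$ from Eq.~\eqref{eq:masking_rate_sched}. Condition on $(x_t,t)$ and on a masked position $i$. The target $x_1^{(i)}$ is then distributed according to the posterior $q_t(\cdot\mid x_t)$ induced by this coupling. We identify $q_t$ with the per-token forward transition of the optimal bridge $\mathbb{P}_{u^\star}$: the masked interpolant is used as the surrogate for the bridge marginals, as stated after Eq.~\eqref{eq:masking_rate_sched}. The Gibbs decomposition then gives
\begin{equation*}
\mathbb{E}\big[-\log p_\theta(x_1^{(i)}\mid x_t,t)\big]=\mathbb{E}_{x_t,t}\big[\mathrm{KL}(q_t(\cdot\mid x_t)\,\Vert\,p_\theta(\cdot\mid x_t,t))\big]+\mathbb{E}\,H(q_t(\cdot\mid x_t)).
\end{equation*}
The entropy term is independent of $\theta$. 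Hence minimizing $\mathcal{L}$ is equivalent to minimizing the expected kernel KL, which proves the first claim. The global minimizer satisfies $p_\theta=q_t$ almost everywhere, assuming the network class is rich enough.

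\textbf{Stage 2 (from kernels to the control).} By Eq.~\eqref{eq:log_probs}, $p_\theta=q_t$ forces $u_\theta(x_t,\cdot,t)=\log q_t(\cdot\mid x_t)-(1-t)f_\phi(x_t)+c(x_t,t)$. The additive constant $c$ is absorbed by the normalization, and we fix it with the convention for $\Psi$ that the tilt is defined relative to the normalized reference rates. This means $u_\theta$ equals the log-ratio of the bridge rates to the reference rates, $\log\big(R_{u^\star}/R_0\big)$.

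To show that this log-ratio is the minimal-action control, we write the Schr\"odinger system. The optimal bridge has rates $R_{u^\star}(x,x')=R_0(x,x')\varphi_t(x')/\varphi_t(x)$ for a space-time harmonic function $\varphi$, which follows from the standard $h$-transform characterization of Eq.~\eqref{eq:sb_problem}. Theorem~\ref{thm:main_kl} and the Corollary then express the KL as $\mathcal{A}(u)$. Because $\Psi$ is strictly convex and the endpoint constraints are linear in the path measure, the minimizer $\mathbb{P}_{u^\star}$ is unique. Its control $u^\star=\log\varphi_t(x')-\log\varphi_t(x)$ is determined up to null sets.

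Finally, consistency of the empirical minimizer follows from standard M-estimation: the loss is a log-likelihood, and the population minimizer is unique.

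\textbf{Main obstacle.} The delicate step is the identification in Stage 1 of the masked-interpolation posterior $q_t$ with the Schr\"odinger bridge's forward transition under the ESM-informed $R_0$. This holds exactly only if the bridge marginals coincide with the masking marginals, which is true for a masking-type reference but not for a generic $R_0$. Our first attempt would be to show that, for a reference acting only on masked positions, the bridge between the point masses $\delta_{\mathcal{M}}$ and $\delta_{x_1}$ factorizes per token with reveal-time law matching Eq.~\eqref{eq:masking_rate_sched}. Mixing over $x_1$ then reproduces $q_t$. Failing that, we would weaken the claim to consistency with respect to the bridge whose marginals are these interpolants.
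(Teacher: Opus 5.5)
Your proposal has two genuine gaps: the identification in Stage~1 does not hold for the paper's reference, and the additive constant in Stage~2 cannot be fixed by convention. For comparison, the paper's proof of Theorem~\ref{thm:loss_consistency} never starts from Eq.~\eqref{eq:madsbm_loss}. It \emph{defines} the population loss as $\mathbb{E}_{t\sim\pi}\mathbb{E}_{X_t\sim\rho^\star_t}[\mathrm{KL}(p^\star_t(\cdot\mid X_t)\,\Vert\,p_\theta(\cdot\mid X_t,t))]$, so your Stage~1 identification is simply assumed. It then shows only that zero loss forces equal kernels, hence $u_{\theta^\ast}=u^\star_t+c(x,t)$. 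Your Gibbs decomposition is the correct form of the first claim, but its target is the masking posterior $q_t$, and $q_t=p^\star_t$ fails for the paper's reference. At the optimum the model kernel is $q_t$ whatever $R_0$ is, because the tilt absorbs $(1-t)f_\phi$; the bridge kernel, by contrast, depends on $R_0$. Your reduction is sound as far as it goes: with $\mu_0=\delta_{\mathcal{M}}$ the only coupling is $\delta_{\mathcal{M}}\otimes\mu_1$, so $\mathbb{P}^\star$ is the $\mu_1$-mixture of pinned reference bridges. Now take a site-independent reference that unmasks to $v$ at rate $r_\tau(v)$. The pinned reveal time has density $\propto r_\tau(x_1^{(i)})e^{-\Lambda(\tau)}$ on $[0,1]$, where $\Lambda$ is the integrated exit rate. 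That density is uniform, as Eq.~\eqref{eq:masking_rate_sched} requires, only for special schedules such as $r_\tau(v)=w(v)/(1-\tau)$ with $w$ a probability vector. A time-homogeneous masking reference already fails. The ESM-gated reference is context dependent, so its pinned bridges do not even factorize over sites. Only your weakened statement is provable, and that is also all the paper's argument supports.

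Fixing $c(x,t)$ ``by convention'' is not legitimate. The loss sees only the softmax, but the path measure and $\mathcal{A}(u)$ depend on the exit intensity $\sum_{x'}R_0(x,x')e^{u(x,x')}$, which $c$ rescales by $e^{c(x,t)}$. By Theorem~\ref{thm:discrete_hjb}, $\sum_{x'\ne x}R^\star_t(x,x')=\sum_{x'\ne x}R_0(x,x')-\partial_tV_t(x)$. So any normalization fixed independently of $V$ (for instance, matching the reference exit rate) is wrong unless $\partial_tV_t(x)=0$. In the exact masking case you can instead set $c$ from the known reveal hazard, $1/(1-t)$ per masked site. Without that, you (like the paper) identify only the jump-destination kernel, i.e.\ $u^\star$ up to $c(x,t)$, which does not determine the bridge or its action. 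A minor point: uniqueness comes from strict convexity of $\mathbb{Q}\mapsto\mathrm{KL}(\mathbb{Q}\Vert\mathbb{P}_0)$ on the convex constraint set (Theorem~\ref{thm:sb_endpoint_tilt}). It does not come from convexity of $\Psi$, because $\mathcal{A}(u)$ integrates against the $u$-dependent measure $\mathbb{P}_u$.
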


The proof and additional discussion are provided in Appendix~\ref{thm:loss_consistency}.

\paragraph{Implementation Details}
We parameterize the control field $u_\theta$ using a Diffusion Transformer (DiT) architecture \citep{peebles2023scalable}, enabling time conditioning over ESM-2 latent embeddings. We use a 50M-parameter DiT with a frozen 650M-parameter ESM-2 model to define reference rates, yielding a total inference footprint of approximately 700M parameters while backpropagating gradients only through the 50M parameter control field. Full architectural details, datasets, and training algorithms are provided in Appendices~\ref{dit architecture supplement}, \ref{dataset supplement}, and Algorithm \ref{alg:madsbm_training}.

\subsection{Generative Sampling}
\label{subsec:generation}

\paragraph{Initialization}
After learning the control field $u_\theta$, sampling begins from the fully masked prior $\mu_0 := \{[\texttt{MASK}]\}_{i=1}^L$. We evolve the sequence backward in time from $t = 1 \to 0$ by simulating a CTMC with the learned generator $R_\theta$. In practice, we discretize time into $N$ steps of size $\Delta t = 1/N$.

\paragraph{Transition Intensities}
At each discrete time $t_k = k/N$, we compute the total exit rate from the current sequence
$x_t$ via an exponential tilt of the control field, scaled by a rate coefficient $\lambda = 0.01$:
\begin{equation}
    R_{\text{tot}}(x_t) = \sum_{v \in \mathcal{V}} \exp\left(\lambda \cdot u_\theta(x_t, v, t_k)\right).
\end{equation}

\paragraph{Jump Probabilities}
Next, transitions between CTMC states are governed by a Poisson jump process, a standard result in CTMC theory. The probability that a token updates within the interval $\Delta t$ is thus:
\begin{equation}
    p_{\text{jump}} = 1 - \exp\left(-R_{\text{tot}}(x_t) \cdot \beta \cdot \Delta t\right).
\end{equation}
with $\beta= 0.05$ as a jump scale.

\paragraph{Token Distribution}
Conditioned on a jump occurring, new tokens are sampled from a multinomial distribution ($n=1$) induced by the control field.
Specifically, we define the token distribution produced by the model as
\begin{equation}
\label{eq:token_dist}
p_\theta(v \mid x_t, t_k)
=
\text{Softmax}\!\left(
\text{Top-}p(\frac{u_\theta(x_t, \cdot, t_k)}{\tau}
)\right),
\end{equation}
where $\tau = 0.5$ and nucleus sampling uses $p = 0.9$. Then for each position, we sample a Bernoulli mask $\mathbf{z} \sim \text{Bern}(p_{\text{jump}})$
and update the sequence with new tokens $x_{new}$:

\begin{equation}
\begin{aligned}
x_{\text{new}} 
&\sim \text{Categorical}\!\left(p_\theta(v \mid x_t, t_k)\right) \\
x_{t+\Delta t}
&= \mathbf{z} \odot x_{\text{new}} + (1 - \mathbf{z}) \odot x_t
\end{aligned}
\end{equation}

\paragraph{Objective-Guided Sampling}
\label{guided sampling method}
While unconditional peptide generation enables broad exploration of sequence space, such samples are unlikely to exhibit the binding properties required for therapeutic relevance. Prior work on guiding generative modeling has focused on continuous domains \citep{dhariwal2021diffusion}, with more recent extensions to discrete sequence spaces for biological sequence design \citep{nisonoff2024unlocking, gruver2023protein, vincoff2025soapia}, considering multiple competing objectives \citep{chen2025mogdfm, tang2025peptune}, and optimization of specific sequence tokens \citep{goel2025tokenlevel}. 

As a case-study, we improve the binding affinity of MadSBM-generated peptides by introducing \emph{Objective-Guided Sampling} within our rate-based Schr\"odinger bridge framework. Specifically, we guide the underlying Poisson process using a surrogate binding affinity classifier that quantifies the affinity of a peptide to the target protein. At each timestep, instead of sampling a single $x_{\text{new}}$, we draw $M=16$ candidate sequences
$\{x_{\text{new}}^{(m)}\}_{m=1}^M$ from the categorical $p_\theta(\cdot \mid x_t, t_k)$. Each candidate is then scored using the external classifier model. The resulting affinity scores $\{a^{(m)}\}_{m=1}^M$ are converted into selection weights via $w^{(m)} = \text{Softmax}\!\left(\frac{a^{(m)}}{\tau}\right)$, where $\tau = 0.5$. A single candidate is then sampled according to $w^{(m)}$ and used as $x_{\text{new}}$ in the sequence update.

\paragraph{Convergence of Sampling}
We enforce monotonicity by only allowing transitions for tokens that are currently masked, \textit{i.e.} once a token is unmasked, it cannot be remasked. Under ideal training, $X_{T}^{\theta}$ is distributed close to $\mu_{1}$, and the entire trajectory $(X_{t})_{t \in [0,T]}$ approximates a minimal-action Schrödinger bridge between the prior and data. Additionally, we show that the marginal distributions produced by our sampling procedure converges to the marginals of the minimal-action Schr\"odinger Bridge.
\newline
\begin{proposition}[Convergence of MadSBM Sampling]
\label{prop:sampling_convergence}
The generative procedure defined by the transition probabilities $p_{\text{jump}}$ and multinomial updates constitutes a consistent time-discretization of the controlled CTMC. Specifically, as the step size $\Delta t \to 0$, the marginal distribution of the generated sequences $\widehat{\rho}_t$ converges to the true marginals $\rho_t$ of the minimal-action Schrödinger bridge defined by the generator $R_\theta$.
\end{proposition}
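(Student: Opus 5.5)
The plan is to treat the sampler as a one-step Markov kernel $K_{\Delta t}(x,\cdot)$ on the finite state space $\mathcal{V}^L$. I would show that it agrees with the CTMC transition semigroup up to $O(\Delta t^2)$ per step, and then pass to the limit through the Kolmogorov forward equation.

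\textbf{Step 1: expand the kernel to first order.} Fix a state $x$ and time $t_k$. Positions update independently with probability $p_{\text{jump}}=1-\exp(-\beta R_{\text{tot}}(x)\Delta t)=\beta R_{\text{tot}}(x)\Delta t+O(\Delta t^2)$. The probability that two or more positions jump in one step is therefore $O(\Delta t^2)$, because the state space is finite and the rates are bounded. Hence, for a single-site edit $x\to x'$ at position $i$ with new token $v$,
\[
K_{\Delta t}(x,x') = \Delta t\,\widehat R_{t_k}(x,x') + O(\Delta t^2),
\qquad
\widehat R_{t_k}(x,x') := \beta R_{\text{tot}}(x)\,p_\theta(v\mid x,t_k).
\]
Multi-site moves have probability $O(\Delta t^2)$, and $K_{\Delta t}(x,x)=1+\Delta t\,\widehat R_{t_k}(x,x)+O(\Delta t^2)$. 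So $K_{\Delta t}=I+\Delta t\,\widehat R_{t_k}+O(\Delta t^2)$ in operator norm. Here I identify $\widehat R$ with the generator $R_\theta$ of the controlled CTMC. This is the interpretive step: I take the sampler's rates, including the fixed constants $\lambda,\beta,\tau$ and the monotone restriction to masked positions, as the definition of $R_\theta$ on the edit graph. Under that identification $\widehat R=R_\theta$ by construction.

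\textbf{Step 2: local and global error.} Let $P_{s,t}$ denote the propagator of the time-inhomogeneous CTMC. Assume $t\mapsto R_\theta(\cdot,\cdot,t)$ is continuous, hence uniformly bounded by some $\Lambda$, since $u_\theta$ is a continuous network on a compact time interval. Then
\[
P_{t_k,t_{k+1}} = I + \Delta t\,R_{\theta,t_k} + o(\Delta t),
\]
so the one-step defect satisfies $\|K_{\Delta t}-P_{t_k,t_{k+1}}\|=o(\Delta t)$. The standard telescoping argument gives
\[
\widehat\rho_{t_n}-\rho_{t_n}=\sum_k \rho_{t_k}\big(K-P\big)\textstyle\prod_{j>k} K,
\]
and since both $K$ and $P$ are stochastic matrices with $\ell^1$ norm 1, this yields $\|\widehat\rho_{t_n}-\rho_{t_n}\|_1\le N\cdot o(\Delta t)=o(1)$. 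The same bound holds when the continuous-time index $t$ is not a grid point, via piecewise-constant interpolation. Equivalently, $\widehat\rho$ is a forward-Euler-type scheme for the KFE, and convergence follows from Lipschitz continuity of the finite linear ODE together with Grönwall's inequality. I would present the telescoping version because it avoids positivity issues.

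\textbf{Step 3: the main obstacle, and how I would handle it.} The difficulty lies in the last clause of the statement. The analysis above only shows convergence to the marginals of the CTMC generated by $R_\theta$. Identifying those marginals with the true minimal-action Schrödinger bridge requires invoking Proposition~\ref{prop:consistency}, which gives $u_\theta\to u^\star$. It also requires that the top-$p$ truncation and temperature $\tau$ be regarded as part of the generator, or else be set to $p=1$ and $\tau=1$. I would state the result in the form ``convergence to the marginals of the CTMC with generator $R_\theta$,'' and then add: under the ideal-training hypothesis of Proposition~\ref{prop:consistency}, $R_\theta=R_{u^\star}$. Continuity of the finite-dimensional KFE solution in its generator then transfers the convergence to the bridge marginals. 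A secondary technical point is uniform boundedness of the rates near $t\in\{0,1\}$. I would assume it outright; the monotone masking makes the chain absorb after at most $L$ jumps, which keeps everything finite.
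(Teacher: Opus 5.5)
Your proof is correct, but it takes a more concrete route than the paper's.

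The paper (Theorem~\ref{thm:euler_convergence}) never examines the sampler's kernel. It posits the explicit Euler recursion $\widehat{\rho}_{k+1}=\widehat{\rho}_k(I+\Delta t\,R_{t_k})$ for the forward equation and assumes $t\mapsto R_t$ is bounded and Lipschitz in operator norm. It then cites the standard global error bound for Euler on Lipschitz ODEs to get $\|\widehat{\rho}_K-\rho_T\|_1\le C\,\Delta t$. The link to the actual procedure is left to a remark that the discretization is ``interpreted as'' Euler integration.

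You instead compute the one-step kernel of the sampler itself: independent per-position jumps with probability $1-e^{-\beta R_{\text{tot}}\Delta t}$, and multi-site moves of order $\Delta t^2$. You show this kernel is $I+\Delta t\,\widehat R_{t_k}+O(\Delta t^2)$ and compare it with the exact propagator by telescoping products of stochastic matrices. This buys you three things:
\begin{enumerate}
\item An explicit account of which generator the sampler really discretizes, so that absorbing $\lambda,\beta,\tau$, top-$p$ and the masking restriction into $R_\theta$ is stated rather than silent.
\item Iterates that are genuine probability vectors for every $\Delta t$, with $\ell^1$-contraction replacing a Gr\"onwall constant.
\item A result under mere continuity of $t\mapsto R_t$.
\end{enumerate}
The cost is that you only get $o(1)$ convergence. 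With the paper's Lipschitz assumption, your per-step defect becomes $O(\Delta t^2)$ and the same telescoping recovers the $O(\Delta t)$ rate.

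Your Step~3 goes beyond what the paper proves, and one caution applies there. The appendix version of the consistency result (Theorem~\ref{thm:loss_consistency}) only identifies the normalized jump distribution, i.e.\ $u_\theta=u^\star+c(x,t)$. A state-dependent shift rescales every exit rate from $x$ by $e^{c(x,t)}$, which does change the marginals, so ideal training alone does not give $R_\theta=R_{u^\star}$. The paper avoids this by reading ``the bridge defined by $R_\theta$'' simply as the CTMC with generator $R_\theta$, which is exactly your Step~2 conclusion.
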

The proof of convergence and full sampling algorithm are given in Appendix~\ref{sampling convergence proof} and in  Algorithm~\ref{alg:madsbm_sampling}.

\section{Results}
\label{sec:results}

\subsection{Unconditional Sequence Generation Quality}
\label{uncond gen results}
\paragraph{Setup}
Unconditional peptide sequence generation broadens the space of potential therapeutic designs. To this end, we use MadSBM to sample 20 sequences for each sequence length $L \in \{5, \dots, 50\}$. For each generated sequence, we evaluate biological plausibility using the ESM-2 pseudo-perplexity (PPL; see Appendix~\ref{esm-2 appendix}) and assess structural stability using the predicted local distance difference test (pLDDT) score \citep{jumper2021af2} of the folded structure produced by ESMFold \citep{lin2023esm2}.

We compare MadSBM against EvoFlow, a state-of-the-art discrete diffusion protein language model (\url{https://huggingface.co/fredzzp/EvoFlow-150M}) built on the Reparameterized Diffusion Model framework (Appendix~\ref{diffusion appendix}) \citep{zheng2023reparameterized}. We elect to compare MadSBM against discrete diffusion models as they have recently become the \textit{de facto} paradigm for biological sequence modeling \citep{wang2024diffusion, gruver2023protein, vincoff2025soapia, goel2025tokenlevel, tang2025peptune}, outperforming traditional autoregressive baselines \citep{nijkamp2023progen2, ferruz2022protgpt2}. Since EvoFlow is trained on a subset of UniProt sequences and peptides are biologically considered to be short proteins, we generate peptide samples with EvoFlow by simply restricting sequence lengths to the range of 5–50 residues. EvoFlow samples new peptide sequences using the Path-Planning scheme (self-planner variant) introduced by \citeauthor{peng2025path} To ensure a fair comparison, we use the 150M-parameter EvoFlow model when benchmarking against our 50M-parameter DiT-based MadSBM. Each model is evaluated at various sampling step budgets, $N \in \{32, 64, 128 \}$.

\paragraph{Results}
Both MadSBM and the discrete diffusion (DD) baseline generate peptide sequences with lower average PPLs than the held-out test set, which is a proxy for the natural peptide sequence distribution. Across the sampling budgets, MadSBM produces sequences with lower or competitive PPLs than the DD baseline, indicating higher biological alignment to the peptide sequence space (Table \ref{tab:uncond_gen_quality}).

\newpage
\begin{wraptable}{r}{0.55\textwidth}  
  \vspace{-0.5em}
  \centering
  \begin{minipage}{\linewidth}
    \captionsetup{skip=4pt}
    \captionof{table}{\textbf{Unconditional sequence generation quality across varying sampling step budgets ($N$).}
    For results, the mean across all sequence lengths and standard deviation are reported.
    MadSBM is compared against the discrete diffusion (DD) model EvoFlow.}
    \label{tab:uncond_gen_quality}
    \medskip
    
    \scriptsize
    \renewcommand{\arraystretch}{1.05} 

    \centering
    \begin{tabular}{@{} ll cc @{}}
      \toprule
      Steps & Model & PPL $(\downarrow)$ & pLDDT $(\uparrow)$ \\
      \midrule
      \multirow{2}{*}{$N=32$}
        & DD       & 10.990 {\scriptsize $\pm$ 6.766} & 71.608 {\scriptsize $\pm$ 9.692} \\
        & \textbf{MadSBM} & \textbf{8.389 {\scriptsize $\pm$ 10.873}} & \textbf{71.687 {\scriptsize $\pm$ 11.835}} \\
      \midrule
      \multirow{2}{*}{$N=64$}
        & DD       & 9.042 {\scriptsize $\pm$ 4.679} & \textbf{73.848 {\scriptsize $\pm$ 9.436}} \\
        & MadSBM   & \textbf{8.943 {\scriptsize $\pm$ 15.384}} & 71.604 {\scriptsize $\pm$ 12.223} \\
      \midrule
      \multirow{2}{*}{$N=128$}
        & \textbf{DD}     & \textbf{7.617 {\scriptsize $\pm$ 6.834}} & \textbf{75.784 {\scriptsize $\pm$ 8.787}} \\
        & MadSBM   & 8.719 {\scriptsize $\pm$ 12.925} & 70.725 {\scriptsize $\pm$ 12.041} \\
      \midrule
      \multicolumn{2}{l}{\textbf{Test Set}} 
        & 13.385 {\scriptsize $\pm$ 5.524} 
        & 63.273 {\scriptsize $\pm$ 11.710} \\
      \bottomrule
    \end{tabular}
  \end{minipage}
  \vspace{-1em}
\end{wraptable}

Notably, the DD baseline produces sequences that fold into higher-confidence structures, as reflected by its higher average pLDDT scores. However, it is important to note that folding models used to compute metrics such as pLDDT rely on evolutionary information for predictions, which can obscure these models' ability to directly assess sequence–structure compatibility \citep{korbeld2025limitations}.

Additionally, prior work has shown that diffusion models that reverse a predefined corruption process require sufficiently large sampling steps to maintain sample quality, as coarse discretizations that do not form a smooth transition from noise to data distributions result in large jumps in the token space \citep{xue2024accelerating, shih2023parallel}. Notably, MadSBM outperforms the DD baseline even under low sampling budgets: at $N=32$, MadSBM achieves lower PPL while maintaining competitive pLDDT scores.

Overall, these results indicate that MadSBM improves on current state-of-the-art DD models in unconditionally generating biologically relevant peptide sequences in a sample-efficient manner.

\subsection{Navigating Probability Paths in the Discrete Sequence Space}

\paragraph{Setup}
Discretized sampling processes in continuous and discrete generative models often force the evolving sequence through low-probability or biologically invalid regions of the sample space, potentially degrading final sample quality \citep{sahoo2024mdlm, domingo2024adjoint, xue2024accelerating, shih2023parallel}. Accordingly, we analyze the likelihood evolution of sequences generated by MadSBM and baseline models  to assess adherence to the protein manifold during generation. During the unconditional sampling procedure described in Section~\ref{subsec:generation}, we compute the negative log-likelihood (NLL) of the intermediate states $x_t$ using the ESM-2 model at each timestep. This metric serves as a proxy for biological plausibility, as $e^{-\text{NLL}} \propto \text{PPL}$. The NLL for all sequence lengths at the given sampling step were averaged.

\paragraph{Results}
We compare the NLL of different models across sampling iterations, using this metric as a proxy of the overall sequence likelihood across the generative trajectory. Figure~\ref{fig:prob_paths} shows that the DD baseline follows a constrained and low-variance trajectory, while MadSBM exhibits greater path diversity and variance, reflecting its ability to explore a broader set of stochastic transport paths. Specifically, the DD baseline is confined to a narrow likelihood window throughout sampling, whereas MadSBM exhibits more variance of likelihoods, enabling it to reach higher-likelihood regions of the sequence manifold earlier in the generation process. Additionally, while MadSBM exhibits a lower likelihood floor during sampling compared to DD, which could result in exploration of a wider and potentially lower-quality region of the sequence space, this does not compromise generative quality as the model ultimately converges to superior or competitive sequence likelihoods (Table~\ref{tab:uncond_gen_quality}).

\begin{figure}[h]
    \centering
    \includegraphics[width=0.8\linewidth]{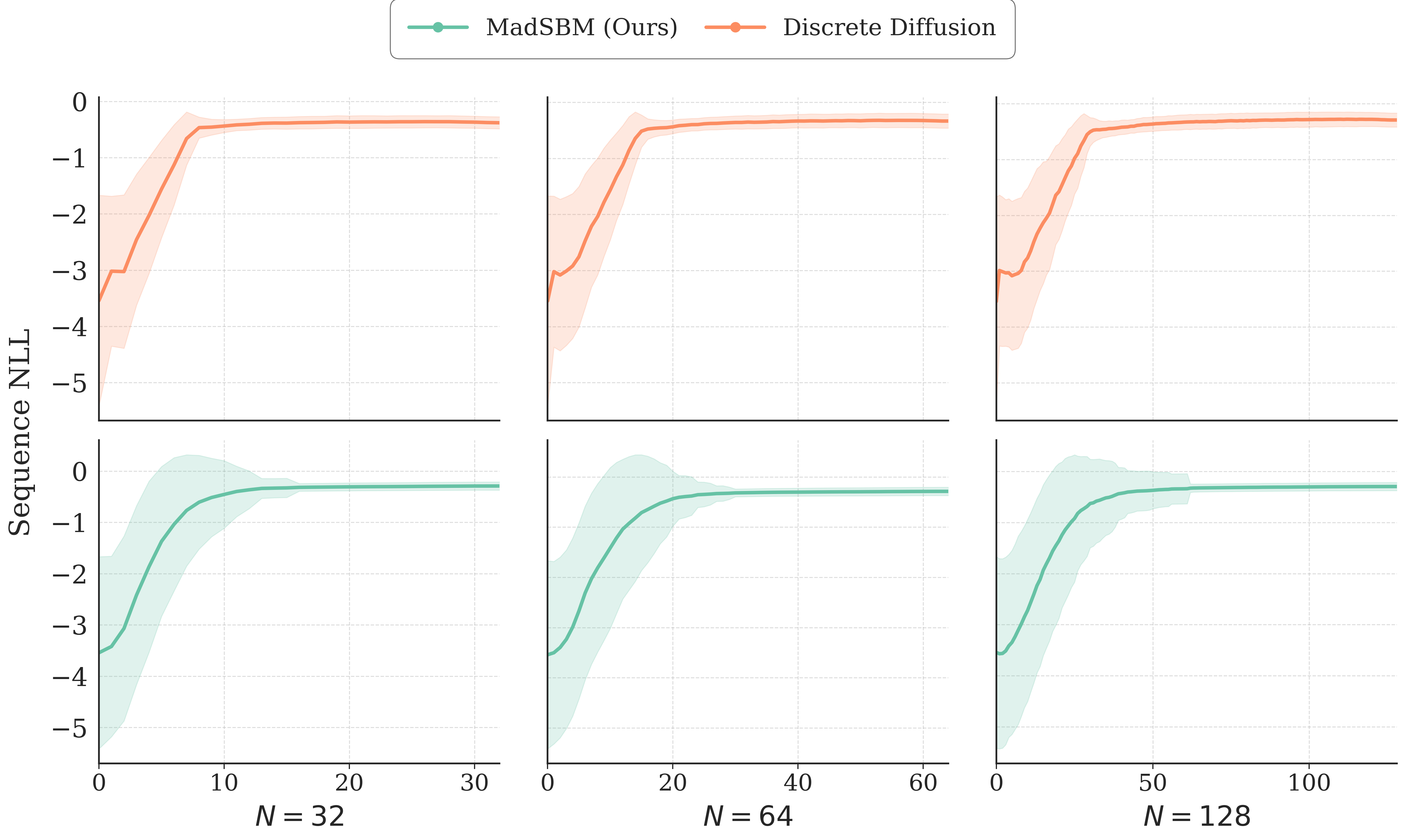}
    \caption{\textbf{Probability paths taken by models under various sampling budgets} $(N)$. The y-axis represents the NLL of the sequence at the current iteration, assessed by the ESM-2-650M protein language model. The shaded area around the traced trajectory represents the standard deviation of the NLL at the currrent sampling iteration.}
    \label{fig:prob_paths}
\end{figure}

\subsection{Ablation on Reference Dynamics}
\paragraph{Setup}
Recall that MadSBM constructs its generative distribution by applying a learned tilt to the biologically-inspired reference process, which is defined through the logits produced by the ESM-2 model. To assess the benefit of using this biologically-relevant reference, we ablate the use of ESM-2 during training by 1) removing the time-gating mechanism and 2) replacing the ESM-derived reference with a random walk over sequence tokens modeled by a uniform generator (Eq.~\eqref{eq:uninformed_ctmc}) that does not incorporate any biophysical peptide representations.

\paragraph{Results} Table~\ref{tab:esm2_ablation} shows that ablating the gating mechanism and ESM-2 itself leads to an increase in perplexity when evaluating the model on the held-out test set, indicating a degradation in generative quality when the biophysical reference dynamics is removed.

\begin{wraptable}{l}{0.45\textwidth} 
  \vspace{-0.75\baselineskip}       

  \begin{minipage}{\linewidth}      
    \captionsetup{skip=4pt}          
    \captionof{table}{\textbf{Ablation of principled reference process components in MadSBM}. We train MadSBM models without time-gating and without any ESM dependency and assess the resulting perplexity on the held-out test set.}
    \label{tab:esm2_ablation}

    \medskip

    \centering
    \scriptsize                       

    \setlength{\tabcolsep}{4pt}      

    \renewcommand{\arraystretch}{0.95}

    \begin{tabular}{@{} l c c @{}}
      \toprule
      & $\log R_0$ & Test PPL ($\downarrow$) \\
      \midrule
      \textbf{MadSBM} 
        & $(1-t)f_\phi(x_t)$ 
        & \textbf{4.503} \\
      \quad w/o gating 
        & $f_\phi(x_t)$ 
        & 4.987 \\
      \qquad \& w/o ESM-2 
        & $0$ 
        & 4.750 \\
      \bottomrule
    \end{tabular}

  \end{minipage}
  \vspace{-0.5\baselineskip} 
\end{wraptable}

Interestingly, fully ablating ESM-2 and the time-gating achieves a lower PPL, outperforming the variant that solely removes time-gating. While initially counterintuitive, this behavior is explained by the training distribution of ESM-2: the model is trained under a masked language modeling objective in which only 15\% of tokens are replaced by the \texttt{[MASK]} token (Appendix~\ref{esm-2 appendix}). As a result, ESM-2 representations become increasingly unreliable at higher masking rates, where the model operates out of distribution. Removing the time-gating mechanism forces the model to rely on ESM logits even in these high-masking regimes, leading to noisier reference dynamics and poorer performance. In contrast, fully ablating ESM-2 eliminates this mismatch entirely, requiring the DiT backbone to jointly encode the peptide biophysical properties while learning the optimal tilt. These results overall validate our choice of using a principled, time-controlled reference process that appropriately modulates the influence of the biological prior across the corruption trajectory.

\subsection{Minimal-Action Sampling Trajectories}

\paragraph{Setup}
A central idea of MadSBM is that minimizing a simple cross-entropy objective corresponds to learning low-action transport paths between a simple reference process and the data distribution. To empirically evaluate this claim, we measure $\mathcal{A}_L(u^w)$, the \textit{instantaneous actional} (the actional at each sampling step) as a proxy for the control cost incurred by our learned tilt.

\begin{wraptable}{r}{0.50\textwidth} 
  \vspace{-0.75\baselineskip}        

  \begin{minipage}{\linewidth}       
    \captionof{table}{\textbf{Instantaneous worst-case actionals for different sampling budget discretizations} ($\Delta t$).
    We report the maximum observed logit $M$ and the corresponding instantaneous actional evaluated for each model.}
    \label{tab:instantaneous_actionals}

    \medskip

    \centering
    \small

    \begin{tabularx}{\linewidth}{@{} c X c c @{}}
      \toprule
      \(\Delta t\) & Model & \(M\) & $\mathcal{A}_L(u^w)$ \\
      \midrule
      \multirow{2}{*}{$1/32$}
        & MadSBM    & 15.308787 & $1.39114\times10^{5}$ \\
        & w/o ESM-2 & 14.275414 & $4.94968\times10^{4}$ \\
      \midrule
      \multirow{2}{*}{$1/64$}
        & MadSBM    & 15.308787 & $6.95569\times10^{4}$ \\
        & w/o ESM-2 & 14.275414 & $2.47484\times10^{4}$ \\
      \midrule
      \multirow{2}{*}{$1/128$}
        & MadSBM    & 15.308787 & $3.47784\times10^{4}$ \\
        & w/o ESM-2 & 14.275414 & $1.23742\times10^{4}$ \\
      \bottomrule
    \end{tabularx}
  \end{minipage}

\end{wraptable}

 While sampling the same sequences used in the unconditional sequence generation and probability path evaluations, we computed the worst-case instantaneous actionals for various sampling budgets using the maximum logit value $M$. This worst-case scenario serves as the upper bound on the actionals that MadSBM should produce. We encourage the reader to review the derivation of the instantaneous actional in Appendix~\ref{instantaneous actionals appendix}.

\paragraph{Results}
Table~\ref{tab:instantaneous_actionals} reports the maximum observed logits $M$ and the corresponding worst-case actions $\mathcal{A}_L(u^w)$, and Figure ~\ref{fig:action_trajs} records the instantaneous actionals produced by MadSBM and its ablated counterpart models.

\begin{figure} \centering \includegraphics[width=1.0\linewidth]{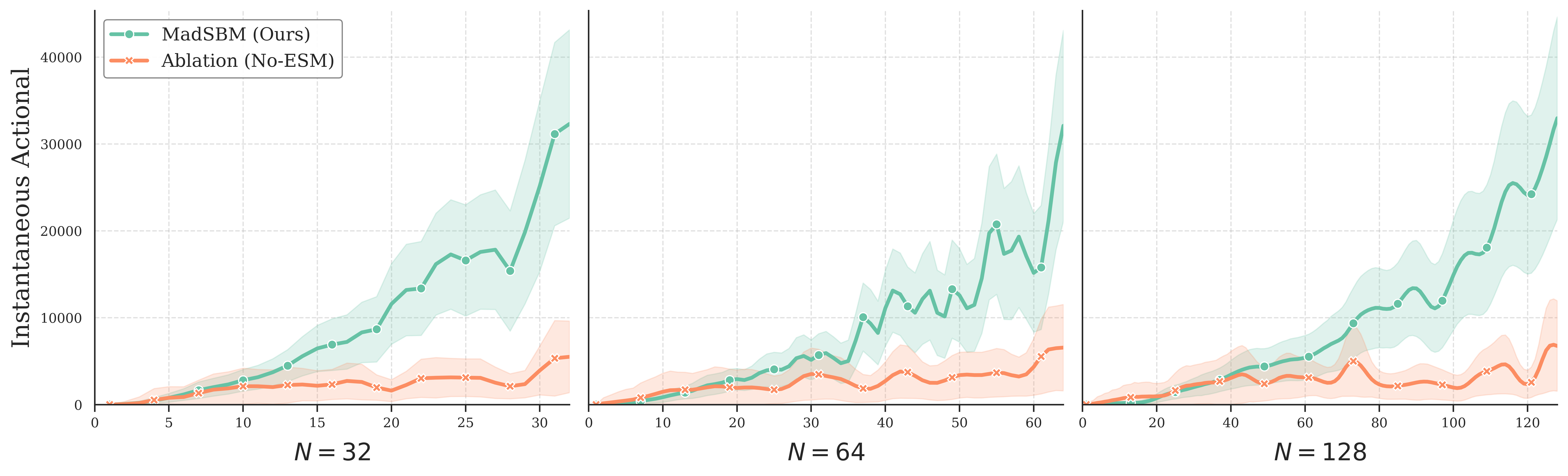} \caption{\textbf{Instantaneous actional values for MadSBM and ESM-ablated counterpart}. The y-axis represents the actional $\mathcal{A}_L(u)$ at the current timestep. Results are shown at various sampling budgets.} 
    \label{fig:action_trajs}
\end{figure}

Figure~\ref{fig:action_trajs} shows that the instantaneous actionals incurred by MadSBM remain comfortably below the corresponding worst-case actionals in Table~\ref{tab:instantaneous_actionals} throughout the sampling trajectory. This behavior is consistent with our theoretical framing of MadSBM learning minimal-action transport paths: although biologically meaningful, the reference process can be noisy, requiring greater action from the learned control field to tilt toward the data distribution at each timestep. In contrast, the ESM-ablated model exhibits near-constant actionals across sampling steps, which is also expected given the control field must not learn to adjust distorted peptide sequence representations from the reference process. Interestingly, we observe that MadSBM actionals increase as the sampling process converges. This reflects the fact that later generative steps must commit high-confidence tokens to the sequence, which further aligns with the low NLL variance at later sampling steps in Figure.~\ref{fig:prob_paths}.

\subsection{Binding Affinity Optimization}
\paragraph{Setup}
We evaluate MadSBM’s ability to design valid peptides by generating candidate binders for four disease targets with known binders and one target with no known binders. For each target, we generate 60 peptide sequences at various sequence lengths  by concatenating an $L$-length sequence of \texttt{[MASK]} tokens to the target amino acid sequence and running the MadSBM sampling procedure. To enhance binding affinity, we subsequently resample 60 peptides per target using objective-guided sampling (Section~\ref{guided sampling method}), where the guidance signal is provided by a pre-trained binding affinity predictor. Specifically, we use the unpooled wild-type to wild-type binding affinity model from PeptiVerse \citep{zhang2026peptiverse}.  In the unconditional and guidance cases, we use $N=32$ sampling steps to highlight MadSBM's sample step efficiency and sequence lengths $L \in \{10, 15, 20\}$ to align with the length distribution of experimentally characterized peptide binders.

\paragraph{Results} Table~~\ref{tab:binding_affinities} shows the binding affinities, ipTM scores, and AutoDock VINA scores for four targets with existing binderse and for a target without an existing binder. 

\begin{table}[h]
\centering
\renewcommand{\arraystretch}{0.95} 
\setlength{\tabcolsep}{4pt}        
\caption{\textbf{Binding affinity scores for designed and existing binders.} Existing binder affinities and sequences are taken from \citep{tang2025gumbelfm}. ipTM values were sourced from AlphaFold3 and docking scores were sourced from AutoDock VINA. Values are reported as average across 10, 15, and 20-length generated peptides. The targets 5E1C, 4EZN, 1AYC, and 5KRI have existing binders, while 3HVE has no existing binders.}
\label{tab:binding_affinities}
\scriptsize
\resizebox{\textwidth}{!}{%
\begin{tabular}{lccccccccc}
\toprule
 & \multicolumn{3}{c}{\textbf{Binding Affinity} ($\uparrow$)}
 & \multicolumn{3}{c}{\textbf{Best ipTM} ($\uparrow$)}
 & \multicolumn{3}{c}{\textbf{Docking Score} (kcal/mol) ($\downarrow$)} \\
\cmidrule(lr){2-4} \cmidrule(lr){5-7} \cmidrule(lr){8-10}
Target 
& Existing & Unconditional & Guided 
& Existing & Unconditional & Guided
& Existing & Unconditional & Guided \\
\midrule

5E1C  & 4.932 & 5.416 & 6.109 & 0.83 & 0.04 & 0.75 & -4.3 & -6.0 & -6.3 \\
4EZN  & 6.176 & 5.468 & 6.072 & 0.53 & 0.28 & 0.68 & -4.1 & -6.0 & -6.8 \\
1AYC  & 6.576 & 7.272 & 7.982 & 0.58 & 0.34 & 0.57 & -5.3 & -6.8 & -7.5 \\
5KRI  & 4.932 & 5.416 & 6.109 & 0.83 & 0.05 & 0.76 & -3.5 & -6.5 & -6.7 \\
\midrule
3HVE  
& \textbf{--} & 5.372 & 6.108 
& \textbf{--} & 0.09 & 0.37
& \textbf{--} & -6.3 & -6.8 \\
\bottomrule
\end{tabular}%
} 
\end{table}

Notably, for three of the four targets with binders, MadSBM-generated peptides \textit{without guidance} have greater binding affinities and docking scores than existing peptides, indicating that MadSBM learned a therapeutically potent peptide distribution. Applying guidance further improves upon binding affinities and docking scores, demonstrating, to our knowledge, the first-ever application of discrete classifier guidance \citep{gruver2023protein, nisonoff2024unlocking, goel2025tokenlevel, tang2025peptune, vincoff2025soapia} to a Schr\"odinger bridge matching-based generative model. We further highlight that MadSBM was constrained to a low sampling budget relative to what's required by state-of-the-art discrete diffusion models, and that while existing binders are often manually curated through slow experimental screening, MadSBM rapidly produces high-quality candidates through principled generative modeling.

\section*{Discussion}
In this work, we introduce \textbf{MadSBM}, which reframes peptide sequence generation as a rate-based stochastic transport problem, showing that low-action Schrödinger bridge dynamics can be learned directly in discrete sequence spaces when coupled with an informative biological reference process. By defining generation relative to pre-trained protein language model logits and learning a time-dependent control field, MadSBM produces probability paths that remain close to high-likelihood peptide neighborhoods throughout sampling, leading to improved perplexity and competitive structural confidence at substantially lower sampling budgets than discrete diffusion and flow-based baselines. At the same time, the method inherits limitations from its reference dynamics, including reliance on pre-trained language models that may be poorly calibrated under extreme masking and sensitivity to the choice of time-gating and rate scaling. Future work will focus on experimentally validating generated peptides, refining chemistry- and structure-aware reference processes, extending the framework to larger protein domains, integrating higher-fidelity physical or experimental feedback into the control field, and extensions to other sequence domains (e.g. natural language, nucleic acids). More broadly, MadSBM opens a path toward discrete generative models that expose and control the entire probability trajectory of generation rather than only the endpoint distribution.

\section*{Impact Statement}
MadSBM extends current discrete generative modeling frameworks by introducing a rate-based Schrödinger bridge framework that enables sample-efficient and controllable peptide sequence generation. By improving the reliability of both unconditional and guided peptide design, this approach has the potential to accelerate early-stage therapeutic discovery while making the generative process more interpretable and auditable. To mitigate misuse, we focus on short peptide sequences, rely on pre-trained biological priors, and emphasize downstream validation and safety-aware filtering before any experimental deployment.

\newpage
\bibliography{pbg_template}

\newpage
\onecolumn
\appendix

\begin{center}
    {\LARGE \bfseries Appendix}
\end{center}
\vspace{1em}

\counterwithin{figure}{section}
\renewcommand{\thefigure}{\thesection\arabic{figure}}
\setcounter{figure}{0}

\section{Theoretical Proofs}
\label{sec:proofs}

In this section we provide theoretical guarantees for Minimal-Action Schr\"odinger Bridges (MadSBM). We first derive a path-space relative entropy identity for controlled continuous-time Markov chains (CTMCs) on discrete sequence spaces. We then characterize the discrete Schr\"odinger bridge as an endpoint-tilted reference path measure and show that its optimal generator is a Doob $h$-transform of the reference generator. This yields a discrete Hamilton--Jacobi--Bellman (HJB) equation for the optimal log-potential and an explicit expression for the optimal edge control. Finally, we analyze the cross-entropy objective, provide stability bounds that relate control error to path-space divergence, establish representational completeness of the exponential-tilt parameterization, and prove convergence of the time-discretized sampler.

\paragraph{Notation} We adopt the notation from the main text. Let $\mathcal{V}$ denote the finite alphabet and $\mathcal{X} := \mathcal{V}^{L}$ as the length-$L$ sequence space. We set $\mu_{0}$ as the simple prior on $\mathcal{X}$ and $\mu_{1}$ as the data distribution on $\mathcal{X}$. For our dynamics, let $R_{0}$ be the reference generator on $\mathcal{X}$ with $R_{0}(x,x') \ge 0$ for $x' \ne x$ and $R_{0}(x,x) = -\sum_{y \ne x} R_{0}(x,y)$, let $R_t$ be the time-dependent controlled generator with induced path law on $[0,T]$ denoted by $\mathbb{P}$, and let the reference path law induced by $(\mu_{0}, R_{0})$ be denoted by $\mathbb{P}_{0}$. We define the multiplicative control parameterization as $R_{t}(x,x') = R_{0}(x,x')\, a_{t}(x,x')$ with $a_{t}(x,x') > 0 \text{ for } x' \ne x$ and sometimes use $a_{t}(x,x') = \exp(u_{t}(x,x'))$ with log-control $u_{t}(x,x') \in \mathbb{R}$. The Schr\"odinger bridge path law between $(\mu_{0},\mu_{1})$ relative to $\mathbb{P}_{0}$ is denoted as $\mathbb{P}^{\star}$, with generator $R^{\star}_{t}$.








\subsection{Path-Space Relative Entropy for Controlled CTMCs}

\label{path space kl proof}

We derive the standard relative-entropy identity for CTMC path measures. This result is the discrete analogue of Girsanov-type identities for diffusion processes.

\begin{theorem}[Path-space KL decomposition for CTMCs]
\label{thm:ctmc_kl}
Let $\mathbb{P}$ and $\mathbb{P}_{0}$ be path measures on $D([0,T];\mathcal{X})$ induced by time-inhomogeneous CTMCs with generators $R_{t}$ and $R_{0}$ and initial distributions $\nu$ and $\nu_{0}$, respectively. Assume absolute continuity in the sense that for all $t \in [0,T]$ and $x \ne x'$,
\begin{equation}
    R_{t}(x,x') > 0 \;\Rightarrow\; R_{0}(x,x') > 0.
\end{equation}
Then $\mathbb{P} \ll \mathbb{P}_{0}$ and the path-space relative entropy satisfies
\begin{equation}
\label{eq:ctmc_kl_density}
\begin{aligned}
\mathrm{KL}(\mathbb{P}_{u} \,\Vert\, \mathbb{P}_{0})
=
\mathbb{E}_{\mathbb{P}_{u}}
\Bigg[
\int_{0}^{T}
\sum_{x' \neq X_t}
\Big(
R_{u}(X_{t}, x') \log \tfrac{R_{u}(X_{t}, x')}{R_{0}(X_{t}, x')}
- R_{u}(X_{t}, x') + R_{0}(X_{t}, x')
\Big)
\, dt
\Bigg].
\end{aligned}
\end{equation}
\end{theorem}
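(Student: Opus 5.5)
The plan is to compute the Radon--Nikodym derivative $d\mathbb{P}/d\mathbb{P}_0$ explicitly on path space and then take its expectation under $\mathbb{P}$.

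\textbf{Step 1 (path densities).} On the finite state space $\mathcal{X}$, a càdlàg path on $[0,T]$ is determined by its initial state $X_0$, its number of jumps $n$, the jump times $0<\tau_1<\dots<\tau_n\le T$, and the visited states $x_0,x_1,\dots,x_n$. Write $\lambda_t(x)=\sum_{y\ne x}R_t(x,y)$ for the exit rate, and similarly $\lambda^0(x)$ for $R_0$. Relative to the product of counting measure on states and Lebesgue measure on ordered jump times, the law of a time-inhomogeneous CTMC has density
\[
\nu(x_0)\prod_{k=1}^{n} R_{\tau_k}(x_{k-1},x_k)\,\exp\!\Big(-\int_0^T \lambda_t(X_t)\,dt\Big).
\]
I would justify this by the standard competing-exponential-clocks construction, conditioning successively on each holding time. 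The same formula with $\nu_0$ and $R_0$ gives the density of $\mathbb{P}_0$.

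\textbf{Step 2 (likelihood ratio).} The hypothesis $R_t>0\Rightarrow R_0>0$ makes every path charged by $\mathbb{P}$ also charged by $\mathbb{P}_0$; we also assume $\nu\ll\nu_0$. This yields $\mathbb{P}\ll\mathbb{P}_0$ with
\[
\log\frac{d\mathbb{P}}{d\mathbb{P}_0}=\log\frac{\nu(X_0)}{\nu_0(X_0)}+\sum_{s\le T:\,X_{s-}\ne X_s}\log\frac{R_s(X_{s-},X_s)}{R_0(X_{s-},X_s)}-\int_0^T\big(\lambda_t(X_t)-\lambda^0(X_t)\big)\,dt .
\]
The stated identity carries no endpoint term, so I will take $\nu=\nu_0$, which is the setting of the paper since both processes start from $\mu_0$. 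Otherwise the identity simply gains an additive $\mathrm{KL}(\nu\Vert\nu_0)$.

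\textbf{Step 3 (compensator).} I take $\mathbb{E}_{\mathbb{P}}$ of the log-ratio. The jump sum has the form $\sum_s g_s(X_{s-},X_s)$ with $g_s=\log(R_s/R_0)$. Under $\mathbb{P}$, the counting process of $x\to x'$ jumps has intensity $R_t(X_{t-},x')$. Hence $\sum_s g_s-\int_0^T\sum_{x'\ne X_t}R_t(X_t,x')\,g_t(X_t,x')\,dt$ is a $\mathbb{P}$-martingale with mean zero (Lévy system / Dynkin formula). Substituting, and writing $\lambda_t-\lambda^0=\sum_{x'\ne X_t}(R_t-R_0)$, gives exactly the integrand $R\log(R/R_0)-R+R_0$. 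Edges with $R_t=0$ are handled by the convention $0\log 0=0$.

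\textbf{Main obstacle.} The expected hardest step is justifying the martingale and expectation interchange when $g$ is unbounded. This happens if rates vanish or blow up in $t$, since $\log(R/R_0)$ can then be infinite. I would first assume the rates are bounded and bounded away from zero on their support, so that integrability is immediate because the expected jump count is at most $T\sup\lambda<\infty$. The general case then follows by monotone convergence: split the integrand into its nonnegative part $R\log(R/R_0)-R+R_0\ge 0$ and apply localization via stopping times. Measurability of $t\mapsto R_t$ is assumed throughout.
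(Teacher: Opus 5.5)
Your proposal is correct and follows the same route as the paper: write the path likelihood through the initial state, jump times and holding-time survival factors, take the log Radon--Nikodym derivative, and use the compensator identity to turn the sum over jumps into a time integral. You are also more careful than the paper on two points it passes over. First, the initial term vanishes only under $\nu=\nu_0$, which the statement does not assume. Second, the paper invokes the compensator identity for bounded $f$ but applies it to the possibly unbounded $\log(R_t/R_0)$; your localization and monotone-convergence step handles this.
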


\begin{proof}
A sample path of a CTMC on a finite state space can be described by an initial state
$X_{0}=x_{0}$ having $N$ jumps with jump times
$0 < \tau_{1} < \cdots < \tau_{N} \le T$, and post-jump states
$x_{1},\dots,x_{N}$, where $x_{i} \ne x_{i-1}$ for each $i$ and
$X_{t}=x_{i}$ for $t \in [\tau_{i},\tau_{i+1})$ with $\tau_{0}=0$ and
$\tau_{N+1}=T$. Let the exit rates be defined as
\begin{equation}
    \lambda_{t}(x) := \sum_{y \ne x} R_{t}(x,y),
    \qquad
    \lambda_{0}(x) := \sum_{y \ne x} R_{0}(x,y).
\end{equation}

We now compute the likelihood of a realized path under the MadSBM model specified by
$(\nu,R_{t})$. The probability of starting in state $x_{0}$ is given by
\begin{equation}
    \mathbb{P}(X_{0}=x_{0}) = \nu(x_{0}).
\end{equation}

While the process remains in state $x_{i}$, no jump occurs on the interval
$[\tau_{i},\tau_{i+1})$. For a time-inhomogeneous CTMC, the corresponding
survival probability is
\begin{equation}
    \exp\!\left(
        -\int_{\tau_{i}}^{\tau_{i+1}} \lambda_{t}(x_{i})\,dt
    \right).
\end{equation}
Multiplying over all inter-jump intervals yields the total survival contribution
\begin{equation}
    \prod_{i=0}^{N}
    \exp\!\left(
        -\int_{\tau_{i}}^{\tau_{i+1}} \lambda_{t}(x_{i})\,dt
    \right)
    =
    \exp\!\left(
        -\sum_{i=0}^{N}
        \int_{\tau_{i}}^{\tau_{i+1}} \lambda_{t}(x_{i})\,dt
    \right).
\end{equation}

At each jump time $\tau_{i}$, the instantaneous probability density of
transitioning from $x_{i-1}$ to $x_{i}$ is given by the transition rate
$R_{\tau_{i}}(x_{i-1},x_{i})$. The contribution from all jumps is therefore given by the product
\begin{equation}
    \prod_{i=1}^{N} R_{\tau_{i}}(x_{i-1},x_{i}).
\end{equation}

Combining the initial distribution, survival probabilities, and jump
intensities, the likelihood of the path under $\mathbb{P}$ is
\begin{equation}
    p_{\mathbb{P}}
    =
    \nu(x_{0})
    \left(
        \prod_{i=1}^{N} R_{\tau_{i}}(x_{i-1},x_{i})
    \right)
    \exp\!\left(
        -\sum_{i=0}^{N}
        \int_{\tau_{i}}^{\tau_{i+1}} \lambda_{t}(x_{i})\, dt
    \right).
\end{equation}
A completely analogous expression holds for $p_{\mathbb{P}_{0}}$ with
$(\nu_{0},R_{0})$.

Since we aim to find the KL divergence, we will take the ratio and the logarithm. Using the support assumption yields
\begin{align}
    \log\frac{p_{\mathbb{P}}}{p_{\mathbb{P}_{0}}}
    &=
    \log\frac{\nu(X_{0})}{\nu_{0}(X_{0})}
    +
    \sum_{i=1}^{N} \log\frac{R_{\tau_{i}}(X_{\tau_{i}^{-}},X_{\tau_{i}})}{R_{0}(X_{\tau_{i}^{-}},X_{\tau_{i}})}
    -
    \int_{0}^{T} \left(\lambda_{t}(X_{t}) - \lambda_{0}(X_{t})\right)\, dt.
    \label{eq:rn_derivative_ctmc}
\end{align}

Taking expectation with respect to $\mathbb{P}$ yields
\begin{equation}
\begin{aligned}
\mathrm{KL}(\mathbb{P}\,\|\,\mathbb{P}_0)
&=
\mathbb{E}_{\mathbb{P}}
\left[
\log\frac{p_{\mathbb{P}}}{p_{\mathbb{P}_{0}}}
\right] \\
&=
\mathbb{E_P}\Bigg[\textrm{log} \big ( \frac{v(X_0)}{v_0(X_0} \big ) \Bigg]
+
\mathbb{E}_{\mathbb{P}}
\left[
\sum_{x' \neq X_t}
\log
\frac{R_t(X_t,x')}{R_0(X_t,x')}
\,dt
\right]
-
\mathbb{E}_{\mathbb{P}}
\left[
\int_{0}^{T} \left(\lambda_{t}(X_{t}) - \lambda_{0}(X_{t})\right)\, dt
\right].
\end{aligned}
\label{eq:path_kl_ctmc}
\end{equation}

Now we simplify. The first term yields $\mathrm{KL}(\nu\Vert\nu_{0})$ by definition and vanishes to zero. For the second term (sum over jumps), we use the Martingale compensator identity for CTMCs, which states for any bounded measurable function $f(t,x,x')$:

\begin{equation}
    \mathbb{E}_{\mathbb{P}}
    \left[
        \sum_{i=1}^{N} f(\tau_{i},X_{\tau_{i}^{-}},X_{\tau_{i}})
    \right]
    =
    \mathbb{E}_{\mathbb{P}}
    \left[
        \int_{0}^{T}
        \sum_{x' \ne X_{t}} R_{t}(X_{t},x') f(t,X_{t},x')\, dt
    \right].
\end{equation}

Applying this with $f(t,x,x') := \log\frac{R_{t}(x,x')}{R_{0}(x,x')}$ transforms the sum over jumps into an integral over time, giving:

\begin{equation}
    \mathbb{E_P} \Bigg[ \int_0^T \sum_{x' \neq xX_t} R_t(X_t, x')\log \frac{R_t(X_t, x')}{R_0(X_t, x')} dt \Bigg] 
\end{equation}

For the third and final term, note that

\begin{equation}
    \lambda_{t}(X_{t}) - \lambda_{0}(X_{t})
    =
    \sum_{x'\ne X_{t}} \left( R_{t}(X_{t},x') - R_{0}(X_{t},x') \right)
\end{equation}

Substituting these identities into \eqref{eq:rn_derivative_ctmc} yields \eqref{eq:ctmc_kl_density}.
\end{proof}

\begin{corollary}[Action form under exponential tilting]
\label{cor:action_tilt}
Assume $\nu=\nu_{0}=\mu_{0}$ and $R_{t}(x,x') = R_{0}(x,x') \exp(u_{t}(x,x'))$ for $x'\ne x$. Then
\begin{equation}
    \mathrm{KL}(\mathbb{P}\,\Vert\,\mathbb{P}_{0})
    =
    \mathbb{E}_{\mathbb{P}}
    \left[
        \int_{0}^{T}
        \sum_{x' \ne X_{t}}
        R_{0}(X_{t},x')
        \left(
            e^{u_{t}(X_{t},x')} u_{t}(X_{t},x')
            -
            e^{u_{t}(X_{t},x')}
            +
            1
        \right)
        dt
    \right].
    \label{eq:action_exptilt}
\end{equation}
\end{corollary}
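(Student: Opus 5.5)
The plan is to substitute directly into the path-space identity of Theorem~\ref{thm:ctmc_kl}, since all the hypotheses of that result hold here. Two checks are needed first.

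\emph{Support condition.} For $x'\ne x$ we have $R_t(x,x')=R_0(x,x')e^{u_t(x,x')}$ with $e^{u_t}>0$. Hence $R_t(x,x')>0$ holds exactly when $R_0(x,x')>0$. In particular $R_t(x,x')>0 \Rightarrow R_0(x,x')>0$, and Theorem~\ref{thm:ctmc_kl} applies.

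\emph{Initial laws.} Because $\nu=\nu_0=\mu_0$, the initial-law term $\mathbb{E}_{\mathbb{P}}[\log(\nu(X_0)/\nu_0(X_0))]=\mathrm{KL}(\nu\Vert\nu_0)$ vanishes. This leaves only the integrated intensity term.

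Next I will simplify the integrand pointwise for each edge $(X_t,x')$ with $x'\ne X_t$. Write $r_0=R_0(X_t,x')$ and $u=u_t(X_t,x')$. If $r_0=0$, then $R_t=0$ as well, and the summand is zero on both sides under the convention $0\log 0=0$. If $r_0>0$, then $\log(R_t/R_0)=u$, and the summand becomes
\begin{equation*}
r_0e^{u}\,u-r_0e^{u}+r_0=r_0\left(e^{u}u-e^{u}+1\right).
\end{equation*}
Summing over $x'\ne X_t$, integrating over $[0,T]$ and taking $\mathbb{E}_{\mathbb{P}}$ yields \eqref{eq:action_exptilt} exactly.

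The only subtle point is keeping the expression finite. The finite state space and the assumption that $u_t$ is bounded and measurable in $t$ make the integrand bounded, which justifies using the compensator identity inside Theorem~\ref{thm:ctmc_kl}.

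It is also worth recording the link to the $\Psi$ form in \eqref{eq: action of control}. The function $\phi(u)=ue^{u}-e^{u}+1$ is the Bregman/Legendre cost $\phi(u)=\Psi^{*}(e^{u}-1)$, i.e.\ the convex conjugate side of $\Psi(z)=e^{z}-z-1$. It is convex and nonnegative, and it vanishes only at $u=0$: indeed $\phi'(u)=ue^{u}$ and $\phi(0)=0$. The two forms agree to second order, since both behave like $u^2/2$ near $u=0$, but they are not identical.

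I therefore expect the main obstacle to be reconciling the displayed corollary with the $\Psi$-form stated in the main text, rather than any step of the computation itself. I would present \eqref{eq:action_exptilt} as the exact identity. I would then record only that $R_0\,\phi(u)\ge 0$, with equality if and only if $u=0$, and that $\phi$ and $\Psi$ share the same minimizer and the same curvature at $u=0$. This is what justifies calling the quantity an action.
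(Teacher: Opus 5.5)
Your proposal is correct and follows the paper's own argument: substitute $R_t=R_0e^{u_t}$ into Theorem~\ref{thm:ctmc_kl}, use $\log(R_t/R_0)=u_t$ on each edge to get $R_0(ue^u-e^u+1)$, and sum over $x'\ne X_t$. Your support, initial-law and $R_0=0$ checks make explicit what the paper leaves implicit, and you are right that this integrand is not $R_0\Psi(u)$ with $\Psi(z)=e^z-z-1$, even though the paper's proof and its main-text corollary invoke $\Psi$. One correction to your side remark: $\phi(u)=ue^u-e^u+1$ is nonnegative with unique zero at $u=0$, but it is not convex in $u$, since $\phi''(u)=(1+u)e^u<0$ for $u<-1$; it is convex only as a function of $r=e^u$, where it becomes $r\log r-r+1$.
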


\begin{proof}
Apply Theorem~\ref{thm:ctmc_kl} with $R_{t}=R_{0}e^{u_{t}}$ and cost function $\Psi(z) = e^z-z-1$. For each edge,
\begin{equation}
\begin{aligned}
R_t\log\frac{R_t}{R_0} - R_t + R_0
&= R_t\left(\log\frac{R_t}{R_0}\right) - R_t + R_0 \\
&= R_t\cdot u - R_t + R_0 &&\text{(since }\log\frac{R_t}{R_0}=u\text{)}\\
&= (R_0 e^{u})\,u - R_0 e^{u} + R_0 &&\text{(since }R_t=R_0 e^{u}\text{)}\\
&= R_0\bigl(e^{u}u - e^{u} + 1\bigr).
\end{aligned}
\end{equation}
and summing over $x'\ne X_{t}$ yields Eq. \eqref{eq:action_exptilt}.
\end{proof}

\subsection{Discrete Schr\"odinger Bridge Structure}

We now characterize the Schr\"odinger bridge as the KL projection of the reference path law onto the set of path laws with prescribed endpoint marginals.

\begin{theorem}[Endpoint-tilted form and uniqueness]
\label{thm:sb_endpoint_tilt}
Let $\mathcal{C}$ denote the set of path measures on $D([0,T];\mathcal{X})$ whose endpoint marginals satisfy $X_{0}\sim\mu_{0}$ and $X_{T}\sim\mu_{1}$. Assume $\mathcal{C}$ is nonempty and that there exists at least one $\mathbb{Q}\in\mathcal{C}$ with $\mathrm{KL}(\mathbb{Q}\Vert \mathbb{P}_{0})<\infty$. Then the optimization problem
\begin{equation}
    \mathbb{P}^{\star}
    \in
    \arg\min_{\mathbb{Q}\in\mathcal{C}}
    \mathrm{KL}(\mathbb{Q}\,\Vert\,\mathbb{P}_{0})
    \label{eq:sb_primal}
\end{equation}
admits a unique minimizer $\mathbb{P}^{\star}$. Moreover, there exist functions $f,g:\mathcal{X}\to(0,\infty)$ such that
\begin{equation}
    \frac{d\mathbb{P}^{\star}}{d\mathbb{P}_{0}}(X)
    =
    \frac{f(X_{0})\, g(X_{T})}{\mathbb{E}_{\mathbb{P}_{0}}[f(X_{0}) g(X_{T})]}.
    \label{eq:sb_factor_form}
\end{equation}
\end{theorem}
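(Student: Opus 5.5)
The plan is the standard static--dynamic reduction, followed by a finite-dimensional entropic optimal transport argument; the finite state space $\mathcal{X}$ is what keeps everything elementary. We disintegrate any path measure $\mathbb{Q}$ with respect to its endpoint pair: write $\mathbb{Q} = \pi^{\mathbb{Q}}_{0T} \otimes \mathbb{Q}^{xy}$, where $\pi^{\mathbb{Q}}_{0T}$ is the joint law of $(X_0,X_T)$ and $\mathbb{Q}^{xy}$ is the conditional bridge law given $X_0=x$, $X_T=y$. The chain rule for relative entropy then gives
\begin{equation*}
\mathrm{KL}(\mathbb{Q}\,\Vert\,\mathbb{P}_0)
= \mathrm{KL}\big(\pi^{\mathbb{Q}}_{0T}\,\Vert\,\pi^{0}_{0T}\big)
+ \sum_{x,y} \pi^{\mathbb{Q}}_{0T}(x,y)\,\mathrm{KL}\big(\mathbb{Q}^{xy}\,\Vert\,\mathbb{P}_0^{xy}\big).
\end{equation*}
The endpoint constraint only involves $\pi^{\mathbb{Q}}_{0T}$, so the second term can be set to zero by choosing $\mathbb{Q}^{xy}=\mathbb{P}_0^{xy}$. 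Hence any minimizer has the form $\mathbb{P}^\star = \pi^\star \otimes \mathbb{P}_0^{xy}$, where $\pi^\star$ minimizes $\mathrm{KL}(\pi\Vert\pi^0_{0T})$ over couplings of $(\mu_0,\mu_1)$. This reduces the path-space problem to the static problem. It also gives $d\mathbb{P}^\star/d\mathbb{P}_0 = (d\pi^\star/d\pi^0_{0T})(X_0,X_T)$, so the factorized form \eqref{eq:sb_factor_form} follows once the static density is shown to factorize.

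\textbf{Static problem.} The coupling set $\Pi(\mu_0,\mu_1)$ is a compact convex polytope in $\mathbb{R}^{\mathcal{X}\times\mathcal{X}}$. The map $\pi\mapsto\mathrm{KL}(\pi\Vert\pi^0_{0T})$ is lower semicontinuous and strictly convex on its finite domain. The hypothesis that some $\mathbb{Q}\in\mathcal{C}$ has finite KL makes the feasible set nonempty, because its endpoint law is a coupling with finite KL (by the chain rule above). Existence and uniqueness of $\pi^\star$ follow, and uniqueness of $\mathbb{P}^\star$ follows because the bridge part is forced.

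\textbf{Factorization.} For the factorization, we write the Lagrangian with multipliers $\varphi(x)$, $\psi(y)$ for the two marginal constraints. The first-order conditions give $\log(\pi^\star/\pi^0)(x,y) = \varphi(x)+\psi(y)+c$ on the support, so $f=e^{\varphi}$, $g=e^{\psi}$, and the constant is absorbed into the normalization $\mathbb{E}_{\mathbb{P}_0}[f(X_0)g(X_T)]$.

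\textbf{Main obstacle.} The hardest step is justifying these KKT conditions with strictly positive $f,g$. The optimizer could a priori put zero mass on pairs where $\pi^0>0$, and then the log-derivative blows up. I would handle this in two ways. First, the entropy has infinite slope at zero, so a perturbation towards any feasible coupling with full support on $\mathrm{supp}(\pi^0)\cap(\mathrm{supp}\,\mu_0\times\mathrm{supp}\,\mu_1)$ strictly decreases the objective; this forces $\pi^\star$ to share that support. Second, as an alternative route, I would construct $(f,g)$ directly as the limit of Sinkhorn/IPF iterations, using the finite-KL coupling to bound the iterates away from $0$ and $\infty$.

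A genuine caveat concerns states where $\pi^0$ itself vanishes. For the mask prior $\mu_0$ and the reference $R_0$, pairs unreachable under $\mathbb{P}_0$ have $\pi^0=0$. There the representation only holds $\mathbb{P}_0$-a.s., and $f,g$ may have to be defined arbitrarily (but positive) outside $\mathrm{supp}\,\mu_0$ and $\mathrm{supp}\,\mu_1$. That is how I would interpret the claim that $f,g>0$ everywhere.
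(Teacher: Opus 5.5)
Your reduction to the static problem is correct and more careful than the paper's proof. The chain rule, the forced bridges $\mathbb{Q}^{xy}=\mathbb{P}_0^{xy}$, and existence and uniqueness of $\pi^\star$ from compactness, lower semicontinuity and strict convexity on $\Pi(\mu_0,\mu_1)$ all go through. The paper, by contrast, argues uniqueness by convexity directly on path space, never establishes existence, and reads off the factor form from a formal Lagrangian. The gap is the passage to strictly positive $f,g$. The hypothesis only provides a feasible coupling $\pi\ll\pi^0_{0T}$, not one with full support on $S:=\mathrm{supp}\,\pi^0_{0T}\cap(\mathrm{supp}\,\mu_0\times\mathrm{supp}\,\mu_1)$. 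Your infinite-slope argument therefore only shows that $\mathrm{supp}\,\pi^\star$ contains the support of every finite-KL feasible coupling, and that support can be strictly smaller than $S$.

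Take $\mathcal{X}=\{1,2\}$, $R_0(1,2)=r>0$, $R_0(2,1)=0$, $\mu_0=\mu_1=(\tfrac12,\tfrac12)$. Since $\pi^0_{0T}(2,1)=0$, the marginal constraints leave the diagonal coupling (KL $=rT/2$) as the only finite-KL candidate, so $\pi^\star(1,2)=0<\pi^0_{0T}(1,2)$. But $f(1)g(2)=0$ would annihilate $\pi^\star(1,1)$ or $\pi^\star(2,2)$, so no factorization exists even allowing $f,g\ge 0$. Correspondingly the Sinkhorn scalings must degenerate: $f(1)g(2)\to 0$ while $f(1)g(1)$ and $f(2)g(2)$ tend to positive limits. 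So a finite-KL coupling cannot bound the iterates.

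Your closing caveat also misplaces the difficulty. Zeros of $\pi^0_{0T}$ are $\mathbb{P}_0$-null and harmless, while zeros of $\pi^\star$ inside $\mathrm{supp}\,\pi^0_{0T}$ are fatal. In particular $g$ cannot be chosen ``arbitrary but positive'' off $\mathrm{supp}\,\mu_1$. Suppose $\mathbb{P}_0(X_T=y)>0$ for some $y\notin\mathrm{supp}\,\mu_1$, as happens for the uniform reference \eqref{eq:uninformed_ctmc} with a data law that does not charge every sequence. Then $d\mathbb{P}^\star/d\mathbb{P}_0=0$ on the $\mathbb{P}_0$-non-null event $\{X_T=y\}$, which forces $g(y)=0$.

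What your argument does establish is the following:
\begin{enumerate}
\item Stationarity on the face $\{\pi\in\Pi(\mu_0,\mu_1):\mathrm{supp}\,\pi\subseteq\mathrm{supp}\,\pi^\star\}$, in whose relative interior $\pi^\star$ lies, always gives $\log(\pi^\star/\pi^0_{0T})=\varphi(x)+\psi(y)$ on $\mathrm{supp}\,\pi^\star$, i.e.\ $\mathbb{P}^\star$-a.s.
\item When $\mathrm{supp}\,\mu_0\times\mathrm{supp}\,\mu_1\subseteq\mathrm{supp}\,\pi^0_{0T}$, your perturbation toward $\mu_0\otimes\mu_1$ works and yields $f>0$ and $g\ge 0$ with $\{g>0\}=\mathrm{supp}\,\mu_1$.
\item The statement with $f,g>0$ holds if and only if some coupling in $\Pi(\mu_0,\mu_1)$ is mutually absolutely continuous with $\pi^0_{0T}$; this hypothesis has to be added.
\end{enumerate}
The paper's proof has the same defect: its final sentence claims a finite-KL feasible point suffices to choose $f,g$, which the example refutes. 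So the theorem is false as stated without that extra hypothesis.
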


\begin{proof}
\textbf{Uniqueness.} The feasible set $\mathcal{C}$ is convex because endpoint marginal constraints are linear in $\mathbb{Q}$. The functional $\mathbb{Q}\mapsto \mathrm{KL}(\mathbb{Q}\Vert \mathbb{P}_{0})$ is strictly convex on $\{\mathbb{Q}: \mathbb{Q}\ll \mathbb{P}_{0}\}$, hence the minimizer is unique.

\textbf{Factor form.} Consider the constrained minimization of $\mathrm{KL}(\mathbb{Q}\Vert \mathbb{P}_{0})$ over $\mathbb{Q}\ll\mathbb{P}_{0}$ with endpoint constraints. Introduce Lagrange multipliers $\alpha,\beta:\mathcal{X}\to\mathbb{R}$ for the constraints $\mathbb{Q}(X_{0}=x)=\mu_{0}(x)$ and $\mathbb{Q}(X_{T}=x)=\mu_{1}(x)$. The Lagrangian is
\begin{equation}
    \mathcal{L}(\mathbb{Q},\alpha,\beta)
    =
    \mathrm{KL}(\mathbb{Q}\Vert\mathbb{P}_{0})
    +
    \sum_{x\in\mathcal{X}} \alpha(x)\left(\mu_{0}(x)-\mathbb{Q}(X_{0}=x)\right)
    +
    \sum_{x\in\mathcal{X}} \beta(x)\left(\mu_{1}(x)-\mathbb{Q}(X_{T}=x)\right).
\end{equation}
A standard variational argument on densities $d\mathbb{Q}/d\mathbb{P}_{0}$ yields the stationarity condition
\begin{equation}
    \log\frac{d\mathbb{Q}}{d\mathbb{P}_{0}}(X)
    =
    \alpha(X_{0}) + \beta(X_{T}) - c
\end{equation}
for some constant $c$. Exponentiating gives
\begin{equation}
    \frac{d\mathbb{Q}}{d\mathbb{P}_{0}}(X)
    \propto
    e^{\alpha(X_{0})} e^{\beta(X_{T})}.
\end{equation}
Setting $f(x):=e^{\alpha(x)}$ and $g(x):=e^{\beta(x)}$ yields \eqref{eq:sb_factor_form} after normalization. The functions $f,g$ are then selected to satisfy the endpoint constraints, which is possible by assumption that $\mathcal{C}$ is nonempty and a finite-KL feasible point exists.
\end{proof}

\subsection{Doob $h$-Transform and Optimal Generator}

The endpoint tilt implies that $\mathbb{P}^{\star}$ is Markov and admits an explicit generator as a Doob $h$-transform of $R_{0}$.

\begin{theorem}[Doob transform form of the Schr\"odinger bridge]
\label{thm:doob_transform}
Let $\mathbb{P}^{\star}$ be as in Theorem~\ref{thm:sb_endpoint_tilt}. Define the backward potential
\begin{equation}
    h_{t}(x)
    :=
    \mathbb{E}_{\mathbb{P}_{0}}\!\left[ g(X_{T}) \,\middle|\, X_{t}=x \right],
    \qquad t \in [0,T],
    \label{eq:h_potential}
\end{equation}
where $g$ is the endpoint factor from \eqref{eq:sb_factor_form}. Then $h_{t}(x)>0$ and the optimal bridge $\mathbb{P}^{\star}$ is a time-inhomogeneous Markov process with generator
\begin{equation}
    R^{\star}_{t}(x,x')
    =
    R_{0}(x,x') \frac{h_{t}(x')}{h_{t}(x)},
    \qquad x'\ne x,
    \label{eq:doob_generator}
\end{equation}
and diagonal entries $R^{\star}_{t}(x,x) = -\sum_{y\ne x} R^{\star}_{t}(x,y)$.
Equivalently, in log-control form,
\begin{equation}
    u^{\star}_{t}(x,x')
    =
    \log h_{t}(x') - \log h_{t}(x).
    \label{eq:optimal_control_potential_diff}
\end{equation}
\end{theorem}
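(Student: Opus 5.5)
The plan is to compute the law of $\mathbb{P}^{\star}$ from the endpoint-tilted density in Theorem~\ref{thm:sb_endpoint_tilt} and show that it coincides with the path law of a CTMC with generator $R_0 h_t(x')/h_t(x)$. This is the standard Doob $h$-transform argument.

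\textbf{Positivity and the backward equation for $h$.} Since $g>0$ on the finite space $\mathcal{X}$ and $h_t(x)$ is an average of $g$ under the transition kernel $P^0_{t,T}(x,\cdot)$, we get $h_t(x)\ge \min g>0$. By the Markov property of $\mathbb{P}_0$, $h_t = P^0_{t,T} g$. Differentiating in $t$ gives the Kolmogorov backward equation
\begin{equation*}
\partial_t h_t(x) + \sum_{y} R_0(x,y) h_t(y) = 0, \qquad h_T = g.
\end{equation*}
On a finite state space this equation is smooth in $t$, and $h_s = P^0_{s,t} h_t$ for all $s\le t$.

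\textbf{Markov property and transition kernels of $\mathbb{P}^{\star}$.} Fix $0\le s\le t\le T$ and an event $A$ in $\mathcal{F}_s$. Using the density $f(X_0)g(X_T)/Z$ and conditioning on $\mathcal{F}_t$ under $\mathbb{P}_0$, we get
\begin{equation*}
\mathbb{P}^{\star}(A,\,X_t=x') = Z^{-1}\,\mathbb{E}_{\mathbb{P}_0}\big[f(X_0)\mathbf{1}_A \mathbf{1}_{\{X_t=x'\}}\big]\, h_t(x').
\end{equation*}
Dividing by the analogous expression with $t=s$ (which contains $h_s(X_s)$), we find that the conditional law of $X_t$ given $\mathcal{F}_s$ depends only on $X_s=x$:
\begin{equation*}
P^{\star}_{s,t}(x,x') = P^0_{s,t}(x,x')\,\frac{h_t(x')}{h_s(x)}.
\end{equation*}
This establishes that $\mathbb{P}^{\star}$ is Markov. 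The identity $h_s=P^0_{s,t}h_t$ guarantees that the rows of $P^{\star}_{s,t}$ sum to one, and the Chapman--Kolmogorov relations follow directly from those of $P^0$.

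\textbf{Identifying the generator.} Take $x'\ne x$ and expand $P^0_{t,t+\delta}(x,x') = R_0(x,x')\delta + o(\delta)$ together with the continuity of $h$. This yields
\begin{equation*}
P^{\star}_{t,t+\delta}(x,x') = R_0(x,x')\frac{h_t(x')}{h_t(x)}\delta + o(\delta),
\end{equation*}
which is \eqref{eq:doob_generator}. The diagonal entries are forced by conservation of mass, and one can check them against the backward equation: $R^{\star}_t(x,x) = R_0(x,x) + \partial_t h_t(x)/h_t(x) \cdot(-1)\ldots$, which reduces to $-\sum_{y\ne x}R^{\star}_t(x,y)$. Taking logarithms of the ratio $R^{\star}_t/R_0$ gives \eqref{eq:optimal_control_potential_diff}.

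\textbf{Main obstacle.} The delicate step is the rigorous passage from finite-dimensional transition kernels to the full path law on $D([0,T];\mathcal{X})$. That is, we need the CTMC with generator $R^{\star}_t$ and initial law $\mu_0$ to actually equal $\mathbb{P}^{\star}$, rather than merely sharing its one-step kernels. I would handle this by noting that both are Markov with the same initial law, which is $f\,h_0\,\mu^{\mathrm{ref}}_0/Z=\mu_0$ by the constraint, and with the same kernels. Since laws on $D$ are determined by their finite-dimensional distributions, the two coincide.

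An alternative route avoids the kernel argument altogether. Insert $R_0 h_t(x')/h_t(x)$ into the Radon--Nikodym formula \eqref{eq:rn_derivative_ctmc} and telescope: the jump terms produce $\sum_i \log h(X_{\tau_i})/h(X_{\tau_i^-})$, and the compensator term produces $\int_0^T \partial_t \log h_t(X_t)\,dt$. Together these give $\log g(X_T) - \log h_0(X_0)$, which recovers the density in \eqref{eq:sb_factor_form}.
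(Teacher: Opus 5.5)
Your proposal is correct, and its main route is the paper's argument: both identify the rate by expanding the $h$-tilted transition probability $P^0_{t,t+\delta}(x,x')\,h_{t+\delta}(x')/h_t(x)$ to first order in $\delta$. Your version additionally proves two things the paper only asserts: that $\mathbb{P}^{\star}$ is Markov with kernels $P^{\star}_{s,t}(x,x')=P^0_{s,t}(x,x')\,h_t(x')/h_s(x)$, and that it equals the CTMC with generator $R^{\star}_t$ as a path law. Two small points: the garbled diagonal check should read $R^{\star}_t(x,x)=R_0(x,x)+\partial_t h_t(x)/h_t(x)$, which the backward equation reduces to $-\sum_{y\ne x}R^{\star}_t(x,y)$, and your alternative Radon--Nikodym telescoping route is also valid once you invoke $f h_0 = Z$ on the support of $\mu_0$, which is exactly the initial-marginal constraint.
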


\begin{proof}
Fix $t \in [0,T)$ and states $x\ne x'$. Under $\mathbb{P}^{\star}$, the conditional law of the future given the present is obtained by tilting the corresponding conditional law under $\mathbb{P}_{0}$ by $g(X_{T})$. More precisely, for any event $A$ measurable with respect to the future $\sigma$-algebra generated by $(X_{s})_{s\in[t,T]}$,
\begin{equation}
    \mathbb{P}^{\star}(A \mid X_{t}=x)
    =
    \frac{\mathbb{E}_{\mathbb{P}_{0}}[ \mathbf{1}_{A} g(X_{T}) \mid X_{t}=x]}{\mathbb{E}_{\mathbb{P}_{0}}[ g(X_{T}) \mid X_{t}=x]}
    =
    \frac{\mathbb{E}_{\mathbb{P}_{0}}[ \mathbf{1}_{A} g(X_{T}) \mid X_{t}=x]}{h_{t}(x)}.
    \label{eq:tilted_conditional}
\end{equation}
Take $A$ to be the event that a jump from $x$ to $x'$ occurs in $[t,t+\Delta t]$ and no other jump occurs in that interval. Under $\mathbb{P}_{0}$, this event has probability
\begin{equation}
    \mathbb{P}_{0}(X_{t+\Delta t}=x' \mid X_{t}=x)
    =
    R_{0}(x,x') \Delta t + o(\Delta t).
\end{equation}
Moreover, conditioning on $X_{t+\Delta t}=x'$ and using the Markov property of $\mathbb{P}_{0}$ gives
\begin{equation}
    \mathbb{E}_{\mathbb{P}_{0}}[ g(X_{T}) \mid X_{t+\Delta t}=x']
    =
    h_{t+\Delta t}(x')
    =
    h_{t}(x') + o(1)
\end{equation}
as $\Delta t \to 0$, by right-continuity of $t\mapsto h_{t}(x')$ on finite state spaces.
Substituting into \eqref{eq:tilted_conditional} yields
\begin{align}
    \mathbb{P}^{\star}(X_{t+\Delta t}=x' \mid X_{t}=x)
    &=
    \frac{ \mathbb{P}_{0}(X_{t+\Delta t}=x' \mid X_{t}=x)\, h_{t+\Delta t}(x')}{h_{t}(x)} + o(\Delta t) \nonumber \\
    &=
    \frac{ R_{0}(x,x') h_{t}(x')}{h_{t}(x)} \Delta t + o(\Delta t).
\end{align}
Therefore the jump intensity from $x$ to $x'$ under $\mathbb{P}^{\star}$ is $R_{0}(x,x') h_{t}(x')/h_{t}(x)$, which is \eqref{eq:doob_generator}. Taking logarithms gives \eqref{eq:optimal_control_potential_diff}.
\end{proof}

\subsection{Discrete HJB Equation for the Optimal Log-Potential}

The Doob potential $h_{t}$ satisfies a linear backward equation under the reference dynamics. Its logarithm satisfies a nonlinear discrete HJB equation whose edge increments recover the optimal control.

\begin{theorem}[Discrete HJB for Schr\"odinger potentials]
\label{thm:discrete_hjb}
Let $h_{t}$ be defined by \eqref{eq:h_potential} and assume $t \mapsto h_{t}(x)$ is differentiable for each $x\in\mathcal{X}$. Then $h_{t}$ solves the backward Kolmogorov equation
\begin{equation}
    \partial_{t} h_{t}(x) + (R_{0} h_{t})(x) = 0,
    \qquad
    h_{T}(x)=g(x),
    \label{eq:backward_kolmogorov_h}
\end{equation}
where $(R_{0} h)(x) := \sum_{x'\ne x} R_{0}(x,x') (h(x')-h(x))$.
Define the log-potential $V_{t}(x):=\log h_{t}(x)$. Then $V_{t}$ satisfies the nonlinear equation
\begin{equation}
    \partial_{t} V_{t}(x)
    +
    \sum_{x'\ne x} R_{0}(x,x')
    \left(
        \exp\!\bigl( V_{t}(x') - V_{t}(x) \bigr) - 1
    \right)
    = 0,
    \qquad
    V_{T}(x) = \log g(x).
    \label{eq:discrete_hjb_equation}
\end{equation}
Moreover, the optimal edge control satisfies
\begin{equation}
    u^{\star}_{t}(x,x') = V_{t}(x') - V_{t}(x).
\end{equation}
\end{theorem}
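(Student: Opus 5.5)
The backward equation for $h_t$ follows from the Markov property of $\mathbb{P}_0$ applied to the definition \eqref{eq:h_potential}. The nonlinear equation for $V_t$ then comes from the chain rule, and the control identity is read off from Theorem~\ref{thm:doob_transform}.

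\textbf{Step 1 (backward Kolmogorov).} Since $R_0$ is time-homogeneous on the finite space $\mathcal{X}$, the transition semigroup of $\mathbb{P}_0$ is $P_s = e^{sR_0}$. By the Markov property, $h_t = P_{T-t}\, g$, meaning $h_t(x) = \sum_y (e^{(T-t)R_0})(x,y)\, g(y)$. This matrix exponential is smooth in $t$, so the differentiability hypothesis holds automatically, and
\begin{equation}
\partial_t h_t = -R_0\, e^{(T-t)R_0} g = -R_0 h_t .
\end{equation}
Using the diagonal convention $R_0(x,x) = -\sum_{x'\neq x} R_0(x,x')$, we have $(R_0 h)(x) = \sum_{x'\ne x} R_0(x,x')\bigl(h(x')-h(x)\bigr)$, which gives \eqref{eq:backward_kolmogorov_h}. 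The terminal condition $h_T = g$ holds because $P_0 = I$. If one prefers not to use the matrix exponential, the same conclusion follows from the Chapman--Kolmogorov relation $h_t = P_{\Delta t} h_{t+\Delta t}$ together with $P_{\Delta t} = I + \Delta t\, R_0 + o(\Delta t)$, which is the same expansion used in the proof of Theorem~\ref{thm:doob_transform}.

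\textbf{Step 2 (positivity and log transform).} For $V_t = \log h_t$ to be defined we need $h_t > 0$. Theorem~\ref{thm:doob_transform} asserts this, and it can also be checked directly: $g > 0$ by Theorem~\ref{thm:sb_endpoint_tilt}, and $e^{(T-t)R_0}$ is a stochastic matrix with nonnegative entries and rows summing to one. Hence $h_t(x)$ is a convex combination of positive values, so it is positive. This positivity step is the only place that needs care; the rest is routine.

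Now divide the equation from Step 1 by $h_t(x)$. We have $\partial_t V_t = \partial_t h_t / h_t$, and
\begin{equation}
\frac{h_t(x') - h_t(x)}{h_t(x)} = e^{V_t(x') - V_t(x)} - 1 .
\end{equation}
Substituting these gives \eqref{eq:discrete_hjb_equation}, with terminal condition $V_T = \log g$.

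\textbf{Step 3 (control).} Equation \eqref{eq:optimal_control_potential_diff} states $u^\star_t(x,x') = \log h_t(x') - \log h_t(x)$, which is $V_t(x') - V_t(x)$ by the definition of $V_t$.

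As a consistency check, write $\Psi(z) = e^z - z - 1$ (from the Action Functional corollary). Then the sum in \eqref{eq:discrete_hjb_equation} equals
\begin{equation}
\sum_{x'\ne x} R_0(x,x')\bigl(u^\star_t(x,x') + \Psi(u^\star_t(x,x'))\bigr).
\end{equation}
This links the HJB equation to the action functional, and it is a useful remark but not needed for the proof.
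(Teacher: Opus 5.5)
Your proof is correct and follows the same route as the paper: the backward equation comes from the Markov semigroup of $\mathbb{P}_0$, dividing by $h_t$ gives the equation for $V_t=\log h_t$, and $u^\star_t$ is read off from Theorem~\ref{thm:doob_transform}. You are more explicit than the paper, writing $h_t=e^{(T-t)R_0}g$ and checking $h_t>0$ from $g>0$ and the stochasticity of $e^{(T-t)R_0}$, but this only fills in steps the paper calls standard.
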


\begin{proof}
The backward equation \eqref{eq:backward_kolmogorov_h} follows from the definition $h_{t}(x)=\mathbb{E}_{\mathbb{P}_{0}}[g(X_{T})\mid X_{t}=x]$ and standard Markov semigroup arguments on finite state spaces.

For the nonlinear equation, differentiate $V_{t}(x)=\log h_{t}(x)$:
\begin{equation}
    \partial_{t} V_{t}(x)
    =
    \frac{\partial_{t} h_{t}(x)}{h_{t}(x)}
    =
    -\frac{(R_{0} h_{t})(x)}{h_{t}(x)}.
\end{equation}
Expand $(R_{0} h_{t})(x)$:
\begin{align}
    \frac{(R_{0} h_{t})(x)}{h_{t}(x)}
    &=
    \sum_{x'\ne x} R_{0}(x,x')
    \left(
        \frac{h_{t}(x')}{h_{t}(x)} - 1
    \right)
    =
    \sum_{x'\ne x} R_{0}(x,x')
    \left(
        \exp\!\bigl( V_{t}(x') - V_{t}(x) \bigr) - 1
    \right),
\end{align}
which yields \eqref{eq:discrete_hjb_equation}. The edge control identity follows from Theorem~\ref{thm:doob_transform} with $V_{t}=\log h_{t}$.
\end{proof}

\subsection{Consistency of Training Objective}

\begin{theorem}[Cross Entropy Loss Consistency]
\label{thm:loss_consistency}
Let $u_t^{\star}(x,x')$ denote the optimal log-control of the Schr\"odinger bridge, as characterized in Eq.~\eqref{eq:optimal_control_potential_diff}. Consider the population training objective induced by the cross-entropy loss,
\begin{equation}
\label{eq:dam_population}
\mathcal{L}(\theta)
=
\mathbb{E}_{t \sim \pi}
\mathbb{E}_{X_t \sim \rho_t^{\star}}
\left[
\mathrm{KL}\!\left(
p_t^{\star}(\cdot \mid X_t)
\;\middle\|\;
p_{\theta}(\cdot \mid X_t, t)
\right)
\right],
\end{equation}
where $\pi$ is any distribution on $[0,T]$ with full support, $\rho_t^{\star}$ is the marginal of the optimal bridge $\mathbb{P}^{\star}$ at time $t$, and
\[
p_{\theta}(x' \mid x,t)
\;\propto\;
R_0(x,x')\,\exp\!\big(u_{\theta}(x,x',t)\big),
\qquad
p_t^{\star}(x' \mid x)
\;\propto\;
R_0(x,x')\,\exp\!\big(u_t^{\star}(x,x')\big).
\]

If a minimizer $\theta^{\ast}$ exists and the model class $\{u_{\theta}\}$ is sufficiently expressive to represent $u^{\star}$, then
\begin{equation}
\mathcal{L}(\theta^{\ast}) = 0
\quad\Longrightarrow\quad
u_{\theta^{\ast}}(x,x',t)
=
u_t^{\star}(x,x')
+
c(x,t),
\end{equation}
for $\pi(dt)\rho_t^{\star}(dx)$-almost every $(t,x)$ and for all neighbors $x'$ with $R_0(x,x')>0$, where $c(x,t)$ is an additive normalization constant independent of $x'$.
\end{theorem}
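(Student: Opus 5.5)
The plan is to show that $\mathcal{L}(\theta^\ast)=0$ forces the two conditional kernels to coincide pointwise almost everywhere. Then the normalized-exponential form of both kernels pins down $u_{\theta^\ast}$ up to an $x'$-independent shift.

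\textbf{Step 1: the minimum value is $0$.} The integrand $\mathrm{KL}(p_t^\star(\cdot\mid X_t)\,\|\,p_\theta(\cdot\mid X_t,t))$ is nonnegative for every $(t,X_t)$, so $\mathcal{L}(\theta)\ge 0$ for all $\theta$. By the expressivity assumption, some $\theta_0$ has $u_{\theta_0}=u^\star$, which gives $p_{\theta_0}=p_t^\star$ and $\mathcal{L}(\theta_0)=0$. Hence any minimizer $\theta^\ast$ also satisfies $\mathcal{L}(\theta^\ast)=0$.

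\textbf{Step 2: vanishing integral gives vanishing integrand.} A nonnegative function with zero integral against $\pi(dt)\rho_t^\star(dx)$ vanishes almost everywhere, so $\mathrm{KL}(p_t^\star(\cdot\mid x)\,\|\,p_{\theta^\ast}(\cdot\mid x,t))=0$ for a.e.\ $(t,x)$. On the finite alphabet, Gibbs' inequality then gives $p_{\theta^\ast}(x'\mid x,t)=p_t^\star(x'\mid x)$ for every $x'$. Full support of $\pi$ is what makes this hold for almost every $t$ in $[0,T]$ rather than on a subset.

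\textbf{Step 3: identify the controls.} Write both kernels explicitly on the neighbor set $\mathcal{N}(x)=\{x':R_0(x,x')>0\}$:
\[
p_{\theta}(x'\mid x,t)=\frac{R_0(x,x')e^{u_\theta(x,x',t)}}{Z_\theta(x,t)},\qquad p^\star_t(x'\mid x)=\frac{R_0(x,x')e^{u^\star_t(x,x')}}{Z^\star(x,t)}.
\]
Both kernels are strictly positive on $\mathcal{N}(x)$, so taking logarithms of the equality from Step 2 is legitimate. The common factor $\log R_0(x,x')$ cancels, leaving
\[
u_{\theta^\ast}(x,x',t)=u^\star_t(x,x')+\log Z_{\theta^\ast}(x,t)-\log Z^\star(x,t).
\]
Setting $c(x,t):=\log Z_{\theta^\ast}(x,t)-\log Z^\star(x,t)$, which does not depend on $x'$, gives the claim.

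The main subtlety is Step 3: the softmax normalization only determines $u$ up to a per-state constant, which is exactly why the conclusion holds only modulo $c(x,t)$. I will remark that this shift is not innocuous for the rates themselves, since $R_{\theta^\ast}=e^{c}R^\star$ rescales the exit rate. Recovering the exact generator $R^\star_t$ therefore needs an extra normalization, for example fixing the total exit rate or the gauge $u(x,x)=0$, and I will state this explicitly rather than claim it. Off the support of $\rho_t^\star$, and on non-neighbors where $R_0=0$, no identification is possible, which is why the statement is restricted to $\pi\otimes\rho^\star$-almost every $(t,x)$ and to neighbors $x'$.
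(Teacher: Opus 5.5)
Your proof is correct and follows the paper's argument essentially step for step. Nonnegativity of the KL forces $p_{\theta^\ast}(\cdot\mid x,t)=p^\star_t(\cdot\mid x)$ almost everywhere, and taking logarithms on the neighbor set cancels $\log R_0$, leaving the $x'$-independent shift $c(x,t)=\log Z_{\theta^\ast}(x,t)-\log Z^\star(x,t)$; your explicit restriction to $\mathcal{N}(x)$ and your tracking of the $t$-dependence of $Z^\star$ are slightly cleaner than the paper's write-up. One aside in Step 2 is inaccurate but harmless: full support of $\pi$ does not upgrade $\pi$-a.e.\ to Lebesgue-a.e.\ in $t$ (a purely atomic $\pi$ charging every rational has full support), and you do not need it, since the claim is only $\pi\otimes\rho^\star$-a.e.
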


\begin{proof}
Let $\theta^\ast$ be a minimizer of the population training objective and assume the model class is rich enough so that there exists $\theta$ with $p_\theta = p^\star$ almost everywhere.
By non-negativity of the KL divergence and the definition of the training objective, 
\[
\mathcal{L}(\theta^\ast)=0
\quad\Longrightarrow\quad
\mathrm{KL}\big(p_t^\star(\cdot\mid x)\,\|\,p_{\theta^\ast}(\cdot\mid x,t)\big)=0
\]

This admits the reverse transition kernel as:
\[
    p_{\theta^\ast}(x'\mid x,t) = p_t^\star(x'\mid x)
\]

for all $R_0(x,x')>0$. By definition, this transition probability has a tilt with a corresponding normalization function

\[
    p_\theta(x'\mid x,t)
    = \frac{R_0(x,x')\,\exp\!\big(u_\theta(x,x',t)\big)}{Z_\theta(x,t)},
    \qquad
    Z_\theta(x,t):=\sum_{y} R_0(x,y)\,\exp\!\big(u_\theta(x,y,t)\big).
\]

A similar result is derived for $p_t^\star(\cdot\mid x)$ with $u^\star_t$ and $Z^\star(x)$. Taking the logarithm yields, for every neighbor $x'$ with $R_0(x,x')>0$,

\[
    \log R_0(x,x') + u_{\theta^\ast}(x,x',t) - \log Z_{\theta^\ast}(x,t)
    =
    \log R_0(x,x') + u^\star_t(x,x') - \log Z^\star(x).
\]

Canceling the common $\log R_0(x,x')$ term and rearranging gives
\[
    u_{\theta^\ast}(x,x',t) - u^\star_t(x,x') = \log Z_{\theta^\ast}(x,t) - \log Z^\star(x).
\]

The right-hand side depends only on $(x,t)$ (through the normalization constants) and is independent of the neighbor $x'$. Therefore there exists a scalar function $c(x,t)$ such that
\[
    u_{\theta^\ast}(x,x',t) = u^\star_t(x,x') + c(x,t)
\]
for almost every $(x, t)$ and for every neighbor $x'$ with $R_0(x,x')>0$, which is exactly the claimed identifiability up to an additive constant.
\end{proof}

\paragraph{Remark.} We dissect the theoretical grounding of using a cross-entropy loss to train MadSBM.
\begin{enumerate}
    \item \textbf{Simulation-free approximation.} Standard Schrödinger bridge solvers require computationally expensive iterative fitting of forward and backward projections (e.g. IPF, IMF). In contrast, MadSBM avoids explicit computation of these potentials. By learning the control field directly from the conditional flow of data, we construct a stochastic process $(X_{t})_{t \in [0,T]}$ connecting the prior $X_{0} \sim \mu_{0}$ to the data $X_{T} \sim \mu_{1}$ that also respects the biological constraints imposed by the reference process $R_0$ (ESM-2).
    
    \item \textbf{Relationship to action minimization.} The cross entropy loss acts as a regression toward the optimal transport velocity. Intuitively, by maximizing the likelihood of the target data token $x_1$ given a corrupted state $x_t$, the model learns to recover the \textbf{optimal exponential tilt} of the reference process. This ensures that the generated trajectories are \textit{action-minimizing} relative to the reference dynamics. 

\end{enumerate}

\subsection{Control Error Implies Path-Space Stability}

We next relate the mismatch between learned and optimal controls to the divergence between the induced path measures.

\begin{theorem}[Path-space KL bound under bounded log-rate error]
\label{thm:kl_stability_bound}
Let $\mathbb{P}_{\theta}$ denote the path law induced by the controlled generator
\begin{equation}
    R_{\theta,t}(x,x') = R_{0}(x,x') \exp\!\bigl(u_{\theta}(x,x',t)\bigr),
\end{equation}
and let $\mathbb{P}^{\star}$ denote the Schr\"odinger bridge induced by $u^{\star}$. Assume both processes share the same initial distribution $\mu_{0}$. Define the pointwise log-rate error
\begin{equation}
    \varepsilon_{t}(x,x') := u_{\theta}(x,x',t) - u^{\star}_{t}(x,x').
\end{equation}
Assume there exists $B>0$ such that for all $(t,x,x')$ with $R_{0}(x,x')>0$,
\begin{equation}
    |\varepsilon_{t}(x,x')| \le B
    \quad\text{and}\quad
    |u^{\star}_{t}(x,x')| \le B.
    \label{eq:bounded_controls}
\end{equation}
Then the path-space divergence satisfies
\begin{equation}
    \mathrm{KL}(\mathbb{P}_{\theta}\,\Vert\,\mathbb{P}^{\star})
    \le
    C_{B}
    \,
    \mathbb{E}_{\mathbb{P}_{\theta}}
    \left[
        \int_{0}^{T}
        \sum_{x'\ne X_{t}}
        R_{0}(X_{t},x') \, \varepsilon_{t}(X_{t},x')^{2}
        \, dt
    \right],
    \label{eq:kl_bound_eps}
\end{equation}
where one admissible constant is
\begin{equation}
    C_{B} := \frac{1}{2} e^{2B}(2B+1).
    \label{eq:CB_constant}
\end{equation}
\end{theorem}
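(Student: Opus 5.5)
We apply the path-space KL identity of Theorem~\ref{thm:ctmc_kl} with $\mathbb{P}_\theta$ in the role of the controlled law and $\mathbb{P}^\star$ in the role of the reference law. Both laws share the initial distribution $\mu_0$, so the initial-law term vanishes. Both rate families are strictly positive multiples of $R_0$, so the support condition holds on every edge with $R_0(x,x')>0$.

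Writing $R_{\theta,t}=R_0e^{u^\star+\varepsilon}$ and $R^\star_t=R_0e^{u^\star}$, the per-edge integrand becomes
\begin{equation*}
R_{\theta,t}\log\frac{R_{\theta,t}}{R^\star_t}-R_{\theta,t}+R^\star_t
=R_0e^{u^\star}\bigl(e^{\varepsilon}\varepsilon-e^{\varepsilon}+1\bigr).
\end{equation*}
This is the same algebra as in Corollary~\ref{cor:action_tilt}, now with base rate $R^\star_t=R_0e^{u^\star}$ in place of $R_0$. It gives
\begin{equation*}
\mathrm{KL}(\mathbb{P}_\theta\Vert\mathbb{P}^\star)=\mathbb{E}_{\mathbb{P}_\theta}\Bigl[\int_0^T\sum_{x'\ne X_t}R_0(X_t,x')\,e^{u^\star_t(X_t,x')}\,\phi\bigl(\varepsilon_t(X_t,x')\bigr)\,dt\Bigr],
\qquad \phi(z):=ze^z-e^z+1 .
\end{equation*}
The jump-sum to time-integral conversion is already built into Theorem~\ref{thm:ctmc_kl} through the compensator identity. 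Applying it here requires bounded integrands, and those follow from $|\varepsilon|,|u^\star|\le B$ on a finite state space.

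The core step is an elementary scalar inequality. We have $\phi(0)=0$, $\phi'(z)=ze^z$ so $\phi'(0)=0$, and $\phi''(z)=(1+z)e^z$. Taylor's theorem with Lagrange remainder then gives $\phi(z)=\tfrac12\phi''(\xi)z^2$ for some $\xi$ between $0$ and $z$. For $|z|\le B$ we bound $|\phi''(\xi)|\le(1+B)e^{B}$, hence $\phi(z)\le\tfrac12(1+B)e^{B}z^2$. Combining this with $e^{u^\star}\le e^{B}$ yields a constant $\tfrac12(1+B)e^{2B}$. This is at most the stated $C_B=\tfrac12e^{2B}(2B+1)$, so the claimed constant is admissible and even slightly loose.

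The only delicate point is the validity of the KL identity. Bounded rates on a finite space give finitely many expected jumps, so all expectations are finite and the compensator identity applies. Pulling the pointwise bound inside the expectation and the time integral then gives the claimed inequality. If a looser route is preferred, one can instead bound $\phi(z)\le\tfrac12 e^{|z|}(1+|z|)z^2$ directly and absorb the extra factors into $C_B$.
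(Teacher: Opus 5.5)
Your proposal is correct and matches the paper's proof step for step: apply Theorem~\ref{thm:ctmc_kl} with $\mathbb{P}^{\star}$ as the reference law, reduce the per-edge integrand to $R^{\star}_{t}\,\phi(\varepsilon)$ with $\phi(z)=ze^{z}-e^{z}+1$, bound $\phi(z)\le\tfrac12 e^{B}(B+1)z^{2}$ by Taylor's theorem, and absorb $e^{u^{\star}}\le e^{B}$ to obtain the sharper constant $\tfrac12 e^{2B}(B+1)\le C_B$. Your explicit check that the bounds on $\varepsilon$ and $u^{\star}$ justify the compensator identity is a small addition that the paper leaves implicit.
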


\begin{proof}
Apply Theorem~\ref{thm:ctmc_kl} with $(\mathbb{P},R_{t})=(\mathbb{P}_{\theta},R_{\theta,t})$ and $(\mathbb{P}_{0},R_{0})=(\mathbb{P}^{\star},R^{\star}_{t})$. Since initial laws match, the initial KL term vanishes. Using
\begin{equation}
    R_{\theta,t}(x,x') = R^{\star}_{t}(x,x') \exp(\varepsilon_{t}(x,x')),
    \qquad
    R^{\star}_{t}(x,x') = R_{0}(x,x') \exp(u^{\star}_{t}(x,x')),
\end{equation}
the integrand of \eqref{eq: relative entropy} becomes
\begin{align}
    &R_{\theta,t}(x,x') \log\frac{R_{\theta,t}(x,x')}{R^{\star}_{t}(x,x')}
    - R_{\theta,t}(x,x') + R^{\star}_{t}(x,x') \nonumber\\
    &=
    R^{\star}_{t}(x,x')
    \left(
        e^{\varepsilon} \varepsilon - e^{\varepsilon} + 1
    \right),
\end{align}
where $\varepsilon=\varepsilon_{t}(x,x')$.
Define $\phi(\varepsilon):=e^{\varepsilon}\varepsilon - e^{\varepsilon} + 1$. Note that $\phi(0)=0$, $\phi'(0)=0$, and $\phi(\varepsilon)\ge 0$ for all $\varepsilon$ because
\begin{equation}
    \phi(\varepsilon) = \int_{0}^{\varepsilon} t e^{t} dt.
\end{equation}
On the bounded interval $[-B,B]$, $\phi$ is twice continuously differentiable and satisfies
\begin{equation}
    |\phi''(t)|
    =
    |e^{t}(t+1)|
    \le e^{B}(B+1)
    \quad\text{for all } t\in[-B,B].
\end{equation}
Since $\phi(0)=\phi'(0)=0$, Taylor's theorem with remainder yields, for $|\varepsilon|\le B$,
\begin{equation}
    0 \le \phi(\varepsilon) \le \frac{1}{2} e^{B}(B+1)\varepsilon^{2}.
\end{equation}
Therefore,
\begin{equation}
    \mathrm{KL}(\mathbb{P}_{\theta}\Vert \mathbb{P}^{\star})
    \le
    \frac{1}{2} e^{B}(B+1)
    \,
    \mathbb{E}_{\mathbb{P}_{\theta}}
    \left[
        \int_{0}^{T}
        \sum_{x'\ne X_{t}}
        R^{\star}_{t}(X_{t},x') \, \varepsilon_{t}(X_{t},x')^{2}
        dt
    \right].
\end{equation}
Finally, under \eqref{eq:bounded_controls}, $R^{\star}_{t}(x,x') = R_{0}(x,x') e^{u^{\star}_{t}(x,x')} \le R_{0}(x,x') e^{B}$, so
\begin{equation}
    \sum_{x'} R^{\star}_{t}(X_{t},x') \varepsilon^{2}
    \le
    e^{B}
    \sum_{x'} R_{0}(X_{t},x') \varepsilon^{2}.
\end{equation}
Combining constants yields \eqref{eq:kl_bound_eps} with $C_{B}=\frac{1}{2}e^{2B}(B+1)$. The stated constant \eqref{eq:CB_constant} is also admissible since $2B+1 \ge B+1$ for $B\ge 0$.
\end{proof}

\subsection{Representational Completeness of Exponential Tilting}

We formalize the completeness of the exponential-tilt parameterization on a fixed reference graph.

\begin{proposition}[One-to-one parameterization on a fixed support graph]
\label{prop:representational_completeness}
Fix a reference generator $R_{0}$ and define its support graph
\begin{equation}
    E := \{(x,x') \in \mathcal{X}\times\mathcal{X} : x'\ne x,\; R_{0}(x,x')>0\}.
\end{equation}
Let $R_{t}$ be any time-dependent generator such that $R_{t}(x,x')>0$ implies $(x,x')\in E$. Then there exists a unique log-control field $u_{t}(x,x')$ on $E$ such that
\begin{equation}
    R_{t}(x,x') = R_{0}(x,x') \exp\!\bigl(u_{t}(x,x')\bigr),
    \qquad (x,x')\in E.
\end{equation}
Conversely, any measurable $u_{t}$ on $E$ defines a generator $R_{t}$ on $E$ via this formula.
\end{proposition}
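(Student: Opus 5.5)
The plan is to prove both directions of Proposition~\ref{prop:representational_completeness} pointwise on the support graph $E$, treating each $t$ separately. Measurability in $t$ is inherited from $R_t$.

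\textbf{Forward direction (existence).} Fix $t$ and an edge $(x,x')\in E$. By definition of $E$ we have $R_0(x,x')>0$, so the ratio $R_t(x,x')/R_0(x,x')$ is well defined and nonnegative. We then set
\[
u_t(x,x') := \log \frac{R_t(x,x')}{R_0(x,x')},
\]
and by construction $R_0 e^{u_t}=R_t$ on $E$.

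\textbf{The obstacle: vanishing rates.} The hard step is the case $R_t(x,x')=0$ on an edge of $E$. The hypothesis only says that $R_t(x,x')>0$ forces $(x,x')\in E$; it does not say that $R_t$ is positive on all of $E$. On such an edge the logarithm equals $-\infty$, so $u_t$ is not real-valued there. I would handle this in one of two ways.
\begin{itemize}
\item Read the statement with the paper's standing convention $a_t(x,x')>0$ from the Notation paragraph. This makes $R_t>0$ on $E$, so the ratio is positive and $u_t$ is a finite real number.
\item Alternatively, allow $u_t$ to take values in $[-\infty,\infty)$ with the convention $e^{-\infty}=0$.
\end{itemize}
I will state explicitly which convention is adopted. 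Edges outside $E$ need no treatment: there both $R_t$ and $R_0$ vanish, and $u$ is only defined on $E$.

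\textbf{Uniqueness.} Suppose $u_t$ and $\tilde u_t$ both represent $R_t$. Then on $E$ we have $R_0 e^{u_t}=R_0 e^{\tilde u_t}$ with $R_0>0$. Dividing by $R_0$ gives $e^{u_t}=e^{\tilde u_t}$. Since $\exp$ is injective on $\mathbb{R}$, and on the extended reals under the chosen convention, it follows that $u_t=\tilde u_t$.

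\textbf{Converse direction.} Given a measurable $u_t$ on $E$, I define the off-diagonal entries by $R_t(x,x'):=R_0(x,x')e^{u_t(x,x')}$ on $E$, set $R_t(x,x'):=0$ off $E$, and fill the diagonal with
\[
R_t(x,x) := -\sum_{y\ne x}R_t(x,y).
\]
Three properties then need checking.
\begin{itemize}
\item The off-diagonal entries are nonnegative.
\item Each row sums to zero, so $R_t$ is a valid generator. The diagonal entries are finite because $\mathcal{X}$ is finite.
\item Measurability in $t$ follows by composing the measurable $u_t$ with the continuous map $\exp$.
\end{itemize}
Finally, the support of $R_t$ is contained in $E$ by construction, which closes the bijection.
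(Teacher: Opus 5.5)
Your proposal is correct and follows the paper's proof: define $u_t=\log(R_t/R_0)$ on $E$ (well defined because $R_0>0$ there), get uniqueness from injectivity of $\exp$, and obtain the converse by construction. Your treatment of edges in $E$ where $R_t$ vanishes is a genuine improvement, since the paper's one-line argument tacitly assumes $R_t>0$ on all of $E$ (the standing convention $a_t>0$); without that assumption no real-valued $u_t$ exists on such edges.
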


\begin{proof}
If $R_{t}(x,x')$ is supported on $E$, define $u_{t}(x,x') := \log\!\left(\frac{R_{t}(x,x')}{R_{0}(x,x')}\right)$ for $(x,x')\in E$. This is well-defined because $R_{0}(x,x')>0$ on $E$. Uniqueness follows from injectivity of the logarithm on $(0,\infty)$. The converse direction is immediate by construction.
\end{proof}

\subsection{Convergence of the Time-Discretized MadSBM Sampler}

\label{sampling convergence proof}

We analyze convergence of the distributional dynamics under a standard Euler discretization of the Kolmogorov forward equation.

\begin{theorem}[Convergence of Euler discretization of CTMC marginals]
\label{thm:euler_convergence}
Let $\rho_{t}$ denote the marginal distribution on $\mathcal{X}$ of a time-inhomogeneous CTMC with generator $R_{t}$ and initial distribution $\rho_{0}=\mu_{0}$, so that $\rho_{t}$ solves
\begin{equation}
    \frac{d}{dt}\rho_{t} = \rho_{t} R_{t}.
    \label{eq:forward_ode}
\end{equation}
Assume $t \mapsto R_{t}$ is Lipschitz in operator norm and uniformly bounded:
\begin{equation}
    \|R_{t}\| \le M,
    \qquad
    \|R_{t}-R_{s}\| \le L |t-s|
    \quad
    \text{for all } s,t \in [0,T].
\end{equation}
Let $\widehat{\rho}_{k}$ be the explicit Euler approximation with step size $\Delta t = T/K$:
\begin{equation}
    \widehat{\rho}_{k+1} = \widehat{\rho}_{k}\left(I + \Delta t\, R_{t_{k}}\right),
    \qquad
    t_{k}=k\Delta t,
    \qquad
    \widehat{\rho}_{0}=\mu_{0}.
    \label{eq:euler_scheme}
\end{equation}
Then there exists a constant $C=C(M,L,T)$ such that
\begin{equation}
    \|\widehat{\rho}_{K} - \rho_{T}\|_{1} \le C\, \Delta t.
\end{equation}
\end{theorem}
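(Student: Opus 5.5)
The plan is the standard one-step-error-plus-stability argument for explicit Euler applied to the linear ODE \eqref{eq:forward_ode}. Treat distributions as row vectors, use $\|\cdot\|_1$ on them, and use the induced operator norm, so that $\|\rho A\|_1 \le \|\rho\|_1 \|A\|$. Let $\rho_{t_k}$ be the exact solution at the grid points and set $e_k := \widehat{\rho}_k - \rho_{t_k}$, with $e_0 = 0$.

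\textbf{Step 1: error recursion.} Write the exact solution over one step as
\begin{equation}
    \rho_{t_{k+1}} = \rho_{t_k}\left(I + \Delta t\, R_{t_k}\right) + \tau_k,
\end{equation}
which defines the local truncation error $\tau_k$. Subtracting this from \eqref{eq:euler_scheme} gives
\begin{equation}
    e_{k+1} = e_k\left(I + \Delta t\, R_{t_k}\right) - \tau_k .
\end{equation}

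\textbf{Step 2: local truncation error.} From $\rho_{t_{k+1}} - \rho_{t_k} = \int_{t_k}^{t_{k+1}} \rho_s R_s\, ds$ we get
\begin{equation}
    \tau_k = \int_{t_k}^{t_{k+1}} \left(\rho_s R_s - \rho_{t_k} R_{t_k}\right) ds .
\end{equation}
Split the integrand as $(\rho_s - \rho_{t_k}) R_s + \rho_{t_k}(R_s - R_{t_k})$ and bound the two pieces separately.
\begin{itemize}
    \item The exact solution is a probability vector, so $\|\rho_s\|_1 = 1$. Hence $\|\rho_s - \rho_{t_k}\|_1 \le \int_{t_k}^{s} \|\rho_r\|_1 \|R_r\|\, dr \le M (s - t_k)$, and the first piece is at most $M^2 (s - t_k)$.
    \item By the Lipschitz assumption, the second piece is at most $L (s - t_k)$.
\end{itemize}
Integrating over the step gives $\|\tau_k\|_1 \le \tfrac{1}{2}(M^2 + L)\Delta t^2$.

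\textbf{Step 3: stability and summation.} Since $\|I + \Delta t R_{t_k}\| \le 1 + M\Delta t \le e^{M\Delta t}$, a discrete Gr\"onwall argument applied to the recursion gives
\begin{equation}
    \|e_K\|_1 \le \sum_{k<K} e^{M(K-k-1)\Delta t}\, \|\tau_k\|_1 \le K e^{MT}\, \tfrac{1}{2}(M^2 + L)\Delta t^2 .
\end{equation}
Using $K\Delta t = T$, this yields $\|e_K\|_1 \le C\Delta t$ with $C = \tfrac{1}{2} T e^{MT}(M^2 + L)$.

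\textbf{Main obstacle.} There is no real difficulty. The only care needed is in the stability step, because we do not claim that $I + \Delta t R_{t_k}$ is stochastic, which would require $\Delta t \max_x |R_{t_k}(x,x)| \le 1$. The crude bound $1 + M\Delta t$ needs no such step-size restriction, at the cost of the factor $e^{MT}$. If one does assume $\Delta t M \le 1$, the factor $e^{MT}$ can be dropped, since $\widehat{\rho}_k$ then stays a probability vector and the Euler map is an $\ell^1$-contraction.
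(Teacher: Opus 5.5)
Your argument is correct and is the standard explicit-Euler global error bound that the paper's proof invokes by citation: a local truncation error controlled by $\|\rho_s\|_1=1$, $\|R_t\|\le M$ and the Lipschitz-in-time assumption, followed by $\ell^1$ stability of $I+\Delta t\,R_{t_k}$ and a discrete Gr\"onwall sum. You carry it out explicitly, which gives the concrete constant $C=\tfrac12 T e^{MT}(M^2+L)$, and your remark that $e^{MT}$ can be dropped when $M\Delta t\le 1$ (so that $I+\Delta t\,R_{t_k}$ is stochastic) is also correct.
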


\begin{proof}
Equation \eqref{eq:forward_ode} is a linear ODE on the finite-dimensional simplex, hence admits a unique continuously differentiable solution. Standard global error bounds for explicit Euler methods on Lipschitz ODEs give $\|\widehat{\rho}_{k}-\rho_{t_{k}}\|_{1} \le C \Delta t$ uniformly in $k$, with $C$ depending on the Lipschitz constants of the vector field $\rho \mapsto \rho R_{t}$ and the time-variation of $R_{t}$. Since $\|\rho R_{t}\|_{1} \le \|\rho\|_{1} \|R_{t}\| \le M$ and $t\mapsto R_{t}$ is Lipschitz, the standard argument applies directly and yields the bound at $t=T$.
\end{proof}

\paragraph{Remark.}
Theorem~\ref{thm:euler_convergence} justifies distributional convergence of the time-discretized sampling procedure used in MadSBM when the discretization is interpreted as Euler integration of the forward equation for marginals. A trajectory-level convergence statement can be obtained via standard coupling results for CTMC time discretizations on finite state spaces under additional uniform rate bounds.

\newpage

\section{Extended Methods}

\subsection{Peptide Dataset}
\label{dataset supplement}
The dataset for MadSBM was curated from the PepNN, BioLip2, and PPIRef datasets \citep{abdin2022pepnn, zhang2024biolip2, bushuiev2023learning}. All peptides from PepNN and BioLip2 were included, along with sequences from PPIRef ranging from 6 to 49 amino acids in length. The dataset was divided into training,
validation, and test sets at an 80/10/10 ratio.

\subsection{Language Modeling for Biological Sequences}
\paragraph{ESM-2 Protein Language Model}
\label{esm-2 appendix}
Masked Language Models (MLMs) employ Transformer-based architectures to learn bi-directional sequence context, distant token relationships, and predict the identity of corrupted (masked) amino acid tokens. The model is trained under a sequence-recovery training
objective, $\mathcal{L} = -\sum_{i\in \mathcal{M}} \log p_\theta(x^i | x^{\backslash \mathcal{M}})$, where $\mathcal{M}$ denotes the set of masked positions. MLMs are strong representation-learners and thus have been trained on evolutionary amino acid sequence datsets, e.g. the ESM-2 family of models \citep{lin2023esm2}. However, training these models to reconstruct only a minor fraction of tokens (15-40\%) across a sequence makes complete de novo sequence generation difficult \citep{vincoff2025fusonplm}, but provides a principled set of sequence representations to enable the training of generative models.

Using ESM-2, we compute the pseudo-perplexity (PPL) metric of a sequence as a measure of biological plausibility. PPL is obtained by masking one token at a time, computing the NLL of the resulting sequence using the ESM-2 language modeling head, and averaging across all sequence positions. This procedure provides a tractable approximation to sequence likelihood for bidirectional MLMs, which do not admit a true autoregressive factorization.

\paragraph{Denoising Diffusion Models}
\label{diffusion appendix}
Diffusion models are a class of generative models defined by Markov processes \citep{ho2020denoising} \citep{sohl2015deep}. The \textit{forward} diffusion steps $q(\textbf{x}_{1:T} | \textbf{x}_0) =  \prod_{t=1} ^ T q(\textbf{x}_t | \textbf{x}_{t-1})$ progressively corrupt an initial data sample $\textbf{x}_0 \sim q(\textbf{x}_0)$ into a noisy prior $\textbf{x} _T \sim q_{\text{noise}}$ across $T$ timesteps. The noise distribution $q_{\text{noise}}$ typically corresponds to a uniform categorical distribution over the vocabulary in the discrete space, $\text{Cat}(|\mathcal{V}|)$ \citep{tang2025peptune, goel2025tokenlevel, zhang2025metalorian, peng2025path, vincoff2025soapia}, or an isotropic Gaussian, $\mathcal{N}(0, I)$, in continuous latent spaces. During inference, the learned \textit{backward} process $p_\theta (\textbf{x}_{0:T}) = p(\textbf{x}_t) \prod _{t=1} ^ T p_\theta (\textbf{x} _ {t-1} | \textbf{x}_t )$ gradually denoises the corrupted data sample to obtain samples from the true data distribution. Diffusion models are trained to maximize the evidence lower bound (ELBO): $\mathbb{E}_{q(\mathbf{x}_{0})} \left[ \log p_\theta(\mathbf{x}_{0}) \right] 
\geq \mathbb{E}_{q(\mathbf{x}_{0:T})} \left[ 
\log \frac{p_\theta(\mathbf{x}_{0:T})}{q(\mathbf{x}_{1:T} \mid \mathbf{x}_{0})}
\right] \notag $

New data samples can be drawn by sampling from $q_\text{noise} (\textbf{x}_T)$ and iteratively applying the learned denoising process $p_\theta( \textbf{x}_{t-1})= p_\theta( \textbf{x}_{t-1} |\textbf{x}_t)$. Various authors (\citep{sahoo2024simple}, \citep{zheng2023reparameterized}) have made simplifying assumptions about the reverse process to derive a computationally inexpensive loss function that reduces to a weighted negative log-likelihood, akin to a weighted form of the NLL over masked tokens. In particular, the state-of-the-art discrete diffusion protein language models DPLM \cite{wang2024diffusion} and EvoFlow used to benchmark MadSBM employ the Reparameterized Diffusion Model strategy from \citeauthor{zheng2023reparameterized}.

\subsection{Modeling MadSBM}
\label{dit architecture supplement}
We parameterize the control field $u_\theta$ using a $\sim$50M parameter Diffusion Transformer (DiT) backbone operating over discrete peptide sequences. The model consists of $2$ DiT layers, each with Multi-head self-attention and Adaptive LayerNormalization (AdaLN). Time-conditioning is achieved with a Gaussian Fourier projection and a learned MLP head. Dynamic batching is used during training for GPU efficiency. The DiT model was trained for 50 epochs (127k steps) on a 4xA6000 GPU system with 192 GB of shared memory. The learning rate was initialized at $1e^{-6}$ and increased with a linear schedule for 2 epochs to $1e^{-4}$, then decayed with a cosine scheduler to $1e^{-6}$. The AdamW optimzer was used with $\beta_1 = 0.9, \beta_2 = 0.999$ and weight decay of 0.01.

\begin{table}[h]
\centering \caption{Diffusion Transformer Architecture.} \label{tab:arch_impl} \begin{tabular}{lcc} \toprule \textbf{Layer} & \textbf{Input Dimension} & \textbf{Output Dimension} \\ \midrule \textbf{Sequence embeddings} & & \\ \quad ESM (encoder) & vocab size & 1280 \\ \quad ESM (LM head) & 1280 & vocab size \\ \textbf{Time embedding} & & \\ \quad Gaussian Fourier projection & 1 & 64 \\ \quad Time embedding projection & 64 & 512 \\ \textbf{DiT Blocks} $\times 2$ & & \\ \quad AdaLN & 1280 & 1280 \\ \quad AdaLN time-conditioning & 512 & $2 \times 1280$ \\ \quad MHSA ($h = 16$) & 1280 & 1280 \\ \quad MLP (FFN) + GeLU & 1280 & 1280 \\ \quad \quad hidden dim = 5120 & & \\ \quad Dropout + Residual & 1280 & 1280 \\ \textbf{Final layers} & & \\ \quad LayerNorm & 1280 & 1280 \\ \quad Linear projection & 1280 & vocab size \\ \bottomrule \end{tabular} \end{table}

\subsection{Instantaneous Actionals}
\label{instantaneous actionals appendix}
The direct discretization of computing the action functional for a $L$-length sequence with $\mathcal{V}=33$ vocabulary positions with an $N$-step sampling budget is the value of the integrand from Eq.~\eqref{eq: action of control}:

\begin{equation}
\label{eq:discrete_action}
\mathcal{A}(u)
\;=\;
\Delta t\cdot \frac{1}{L}\sum_{\ell=1}^{L}
\sum_{v=1}^{V}
R_0[\ell,v]\;\Psi\bigl(u[\ell,v]\bigr)
\end{equation}

where $\Psi(z)=e^{z}-z-1$ as before and $\Delta t := 1/N$, where $N$ is the sampling budget. Computing this value directly is numerically unstable as both $R_0$ and $\Psi(u)$ are dependent on unbounded neural network logits, exacerbated through the relationship $\Psi(u) \propto e^u$. As a practical solution to obtain an interpretable upper bound, we define the reference process and control field in the \textit{worst-case scenario}. For the reference rate, we adopt a uniform random-walk reference process at each token position, \textit{i.e.} $R_0[\ell,\cdot]=1/V$, so that $\sum_{v} R_0[\ell,v]=1$. To simplify the control field, we consider the pathological scenario in which the model assigns large transition rates to all possible vocabulary positions for each sequence token. We approximate this large transition rate by evaluating the trained model on the held-out test set and recording the maximum observed logit value $M$. Recall that since $R_u(x, x') = R_0(x,x') \exp(u_\theta(x,x'))$, we derive $M$ from the logit values produced by the DiT model parameterizing $u_\theta$ and ignore the ESM logits forming $R_0$. Using $M$, we define the constant tilt $u^w \in \mathbb{R}^{L\times V}$ with entries $u^w_{\ell,v}=M$ for all $\ell$ and $v$. Under this worst-case tilt and the uniform reference process, the total actional for an $L$-length sequence simplifies to

\begin{equation}
    \begin{aligned}
        \mathcal{A}_L(u^w)
        &= \Delta t \cdot \Psi(u^w)
        \\
        & = \Delta t \cdot L\bigl(e^{M} - M - 1\bigr).
    \end{aligned}
\end{equation}
In total, this value approximates Eq.~\ref{eq: action of control}, giving us a bound to asses if MadSBM's instantaneous actionals correlate to a low-cost transport plan.

\clearpage

\section{Algorithm Pseudocode}

\begin{algorithm}[]
\caption{MadSBM Training}
\label{alg:madsbm_training}
\begin{algorithmic}[1]
\Require Dataset $\mathcal{D}$, control field $u_\theta$, reference prior $f_\phi$ (ESM-2), max time $T$
\While{not converged}
    \State Sample batch $x_1 \sim \mathcal{D}$ and prior $x_0 \sim \mu_0$ (Fully Masked)
    \State Sample timestep $t \sim \mathcal{U}(0, T)$
    \State Corrupt sequence: $x_t \leftarrow \text{Interpolate}(x_0, x_1, t)$ \Comment{Mask tokens with probability $1 - \frac{t}{T}$}
    \State Compute total logits using biological prior: $\mathbf{z}_t = u_\theta(\mathbf{x}_t, t) + f_\phi(\mathbf{x}_t)$
    \State Compute negative log-likelihood: $\mathcal{L}(\theta) = - \sum_{i \in \mathcal{M}} \log p_\theta(x_1^{(i)} \mid x_t, t)$
    \State Take gradient descent step on: $\nabla_\theta \mathcal{L}$
\EndWhile
\State \Return Trained control field $u_\theta$
\end{algorithmic}
\end{algorithm}

\vspace{1em}

\begin{algorithm}[H]
\caption{MadSBM Sampling}
\label{alg:madsbm_sampling}
\begin{algorithmic}[1]
\Require Control field $u_\theta$, masked prior $\mu_0$, steps $K$, hyperparameters $\lambda, \beta, \tau, p$
\State Initialize $x_K \sim \mu_0$ (Fully Masked) and define step size $\Delta t \leftarrow 1/K$
\For{$k = K \to 1$}
    \State Set time $t \leftarrow k/K$
    
    \State Compute total rates: $\mathbf{R}_{\text{tot}} \leftarrow \sum_{v} \exp(\lambda \cdot u_{\theta}(x_k, v, t))$ \Comment{Transition dynamics}
    
    \State Compute jump probabilities: $\mathbf{p}_{\text{jump}} \leftarrow 1 - \exp\bigl(-\mathbf{R}_{\text{tot}} \cdot \beta \cdot \Delta t\bigr)$
    
    \State Sample active mask: $\mathbf{z} \sim \text{Bernoulli}(\mathbf{p}_{\text{jump}})$ \Comment{Determine possible transitions}
    
    \State Filter logits: $\hat{u} \leftarrow \text{Top-}p(u_{\theta}(x_k, \cdot, t) / \tau)$
    
    \State Sample candidates: $x_{\text{new}} \sim \text{Multinomial}(\text{Softmax}(\hat{u}))$
    
    \State Update each state: $x_{k-1}^{(i)} \leftarrow x_{\text{new}}^{(i)}$ if $z^{(i)}=1$ and $x_k^{(i)} = \texttt{[MASK]}$, else $x_{k}^{(i)}$ \Comment{Sample new tokens}
\EndFor
\State \Return $x_{0}$
\end{algorithmic}
\end{algorithm}

\end{document}